\title{Parallel Polynomial Permanent Mod Powers of 2 and Shortest Disjoint Cycles} 
\titlerunning{Parallel Polynomial Permanent} 
\author{Samir Datta}{Chennai Mathematical Institute, Chennai \and \href{sdatta@cmi.ac.in}{sdatta@cmi.ac.in} }{}{}{Partially funded by a grant from Infosys foundation and SERB-MATRICS grant MTR/2017/000480}
\author{Kishlaya Jaiswal}{Chennai Mathematical Institute, Chennai \and \href{kishlaya@cmi.ac.in}{kishlaya@cmi.ac.in}}{}{}{}
\authorrunning{S. Datta and K. Jaiswal} 
\keywords{permanent mod powers of 2, parallel computation, graphs, shortest disjoint paths, shortest disjoint cycles} 
\newcommand{\Modulo}[1]{\ (\mathrm{mod}\ #1)}
\newcommand{\perm}{\mathrm{perm}}
\newcommand{\sgn}{\mathrm{sgn}}
\newcommand{\hf}{\mathrm{hf}}
\newcommand{\pf}{\mathrm{pf}}
\DeclarePairedDelimiter\ceil{\lceil}{\rceil}
\DeclarePairedDelimiter\floor{\lfloor}{\rfloor}
\newfunc{\DLOGTIME}{DLOGTIME}
\begin{document}
	
	\maketitle
	
	\begin{abstract}
		We present a parallel algorithm for permanent mod $2^k$ of a matrix of univariate integer polynomials. It places the problem in $\parity \L \subseteq \NC^2$. This extends the techniques of Valiant \cite{DBLP:journals/tcs/Valiant79}, Braverman, Kulkarni and Roy \cite{DBLP:journals/cc/BravermanKR09} and Bj\"{o}rklund and Husfeldt \cite{DBLP:journals/siamcomp/BjorklundH19} and yields a (randomized) parallel algorithm for shortest 2-disjoint paths improving upon the recent (randomized) polynomial time algorithm \cite{DBLP:journals/siamcomp/BjorklundH19}.
		
		We also recognize the disjoint paths problem as a special case of finding disjoint cycles, and present (randomized) parallel algorithms for finding a shortest cycle and shortest 2-disjoint cycles passing through any given fixed number of vertices or edges.
	\end{abstract}

\section{Introduction}

The problem of computing the determinant of a matrix has been a very well studied problem in the past, and several fast (both sequential and parallel) algorithms are known. On the contrary, Valiant in his seminal paper \cite{DBLP:journals/tcs/Valiant79} showed that computing permanent of a matrix, an algebraic analogue of determinant, is hard. However modulo 2, determinant and permanent are equal and so building up on this, he presented an algorithm for computing permanent of an integer matrix modulo small powers of $2$. The algorithm uses Gaussian elimination which is known to be highly sequential and so it is desirable to have a parallel algorithm. This was resolved by \cite{DBLP:journals/cc/BravermanKR09} who presented a $\parity \L \subseteq \NC$ algorithm.

Moreover, $\NC$ algorithms for computing determinant of matrices over arbitrary commutative rings are also known, e.g. \cite{DBLP:journals/cjtcs/MahajanV97}. We would like to ask a similar question for the permanent. One natural extension would be to consider the ring of polynomials with integer coefficients. In this paper, we present a $\NC$ algorithm to compute permanent of matrices over integer polynomials modulo $2^k$ for any fixed $k$.

\begin{theorem}
	\label{thm:main1}
	Let $k \geq 1 $ be fixed and $A$ be a $n \times n$ matrix of integer polynomials, such that the degree of each entry is atmost $\poly(n)$. We can compute $\perm(A) \Modulo{2^k}$ in $\parity \L \subseteq \NC^2$
\end{theorem}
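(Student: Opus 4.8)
The plan is to reduce the problem to computing determinants of polynomial matrices modulo $2^k$, and then to solve that core problem in $\parity\L$ by interpolation over a Galois ring. The outer reduction is supplied by the techniques of Valiant and of Braverman--Kulkarni--Roy: modulo $2^k$ the permanent can be recovered, through an $\parity\L\subseteq\NC^2$ computation, from the determinants (taken modulo $2^k$) of auxiliary matrices built from $A$ using only ring additions, subtractions and products -- no divisions, since the determinant modulo $2^k$ is itself evaluated by a division-free counting procedure. This derivation is a sequence of identities valid over an \emph{arbitrary} commutative ring, so it applies verbatim once the coefficient ring $\mathbb{Z}/2^k$ is replaced by the polynomial ring $R := (\mathbb{Z}/2^k)[y]$. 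Consequently it suffices to establish the following core claim: the determinant modulo $2^k$ of an $n\times n$ matrix over $R$, each entry of degree $\poly(n)$, lies in $\parity\L$; the surrounding manipulations are then $\parity\L\subseteq\NC^2$ bookkeeping.

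To see why such a reduction exists and why it survives the passage to $R$, recall the cycle-cover expansions $\perm(A)=\sum_C w(C)$ and $\det(A)=\sum_C (-1)^{e(C)} w(C)$, where $C$ ranges over cycle covers of the weighted digraph of $A$, $w(C)$ is the product of the participating entries, and $e(C)$ counts the even-length cycles of $C$. These agree modulo $2$ because $\sgn$ contributes the factor $(-1)^{e(C)}$, so for $k=1$ the permanent is simply the determinant and the core claim (at $k=1$, i.e.\ over $\mathbb{F}_2[y]$) already settles the theorem. For larger $k$ the discrepancy $\perm(A)-\det(A)=2\sum_{e(C)\ \mathrm{odd}} w(C)$ is a multiple of $2$, and the Valiant--BKR scheme peels off one factor of $2$ at a time, at each stage expressing the relevant weighted count of cycle covers through determinants of matrices formed by ring operations on $A$. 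Since none of these steps uses anything beyond addition and multiplication, replacing integer entries by polynomials changes nothing structurally: we simply carry polynomials through the same computation.

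For the core claim I would compute the determinant polynomial by evaluation and interpolation, the natural difficulty being that $\mathbb{Z}/2^k$ has too few units: the pairwise differences of candidate interpolation points need not be invertible, so the Vandermonde system is singular and ordinary interpolation breaks down. The remedy -- and the main obstacle of the proof -- is to interpolate over the \emph{Galois ring} $\mathrm{GR}(2^k,m)=(\mathbb{Z}/2^k)[x]/(f)$, where $f$ is monic of degree $m=O(\log n)$ and irreducible modulo $2$, chosen so that $2^m$ exceeds the degree of the determinant. Its Teichm\"{u}ller set of $2^m$ representatives reduces bijectively onto $\mathbb{F}_{2^m}$, so any two distinct such points differ by a unit of $\mathrm{GR}(2^k,m)$; the associated Vandermonde matrix is therefore invertible and Lagrange interpolation goes through. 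I would evaluate the polynomial matrix at enough Teichm\"{u}ller points, compute the determinant of each resulting matrix over $\mathrm{GR}(2^k,m)$, and interpolate to recover the coefficients of $\det A \Modulo{2^k}$ as an element of $R$.

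Two sub-steps then remain, both to be carried out in $\parity\L$. First, the determinant over $\mathrm{GR}(2^k,m)$: using the regular representation $\mathrm{GR}(2^k,m)\hookrightarrow M_m(\mathbb{Z}/2^k)$, which is a ring homomorphism, the division-free (clow-sequence) determinant over $\mathrm{GR}(2^k,m)$ is simulated by the corresponding iterated-matrix computation over $\mathbb{Z}/2^k$ on matrices of size $nm\times nm$; since counting modulo $2^k$ collapses to counting modulo $2$, i.e.\ $\mathrm{Mod}_{2^k}\L=\parity\L$, this lies in $\parity\L$. Second, one must actually realize the Galois ring and its arithmetic inside $\parity\L$ -- constructing the irreducible $f$ of logarithmic degree, and performing the additions, multiplications and unit inversions needed for interpolation -- which is precisely the question of implementing field/Galois-ring operations in $\parity\L$; here I would invoke the known results on field operations in $\parity\L$. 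Combining these two sub-steps establishes the core claim, and with the outer reduction the theorem follows.
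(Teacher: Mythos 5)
There is a genuine gap, and it sits exactly at the step you dismiss as routine. You assert that the Valiant/Braverman--Kulkarni--Roy reduction from $\perm(A) \Modulo{2^k}$ to determinant computations is ``a sequence of identities valid over an arbitrary commutative ring,'' involving ``no divisions,'' and hence applies verbatim over $R=(\mathbb{Z}/2^k)[y]$. That is false. The BKR recursion peels off a factor of $2$ by first finding a nontrivial solution of $A^Tv=0$ over $\mathbb{Z}_2$ \emph{normalized so that some coordinate is a unit} (say $v_1=1$), then row-replacing and expanding; when $\perm(A)\not\equiv 0 \Modulo{2}$ it further needs the correction term $y_0=\perm(A)\,\perm(A[\widehat{\{i\}},\widehat{\{j\}}])^{-1}$, and the parallel version needs a PLU factorization to make all leading principal minors nonsingular. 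Every one of these steps uses that the mod-$2$ coefficient structure is a \emph{field}. Over $\mathbb{F}_2[y]$ they break: e.g.\ for $A=\begin{pmatrix} y & y+1 \\ y & y+1 \end{pmatrix}$ there is no null vector with any coordinate equal to $1$, so the recursion cannot even start. (If a ring-generic, division-free reduction of the kind you posit existed, it would settle the question of computing permanents over arbitrary commutative rings of characteristic $2^k$, which remains open.) Your cycle-cover argument that $\perm(A)-\det(A)$ is divisible by $2$ does not produce such a reduction; it only restates the base case $k=1$.

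Ironically, you have the right repair in hand but deploy it in the wrong place. The Galois ring $\mathrm{GR}(2^k,m)=(\mathbb{Z}/2^k)[x]/(f)$ with $f$ irreducible mod $2$ is exactly the structure needed --- not to interpolate determinants, but to run the \emph{permanent} recursion itself: modulo $2$ it is a field, so null vectors with unit coordinates, elementwise inverses, and PLU factorizations all exist, and the BKR scheme goes through for matrices over this ring. One then recovers $\perm(A) \Modulo{2^k}$ over $\mathbb{Z}[y]$ either by taking $\deg f$ larger than $\deg \perm(A)$, or by keeping $\deg f$ logarithmic and interpolating with products of units (close to your Teichm\"{u}ller-point idea, but applied to permanents, not determinants). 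By contrast, the determinant sub-problem to which you devote the Galois-ring machinery never needed it: the determinant over any commutative ring is division-free (clow sequences), so $\det \Modulo{2^k}$ of a polynomial matrix reduces directly to iterated matrix products over $\mathbb{Z}/2^k$, and your observation that $\mathrm{Mod}_{2^k}\L=\parity \L$ finishes that part without interpolation. As written, though, the proposal's outer reduction is unsound, so the proof does not go through.
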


In the second half of the paper, we consider some applications of our parallel polynomial permanent algorithm. 
One direct consequence is that we are now able to parallelize the shortest 2-disjoint paths problem \cite{DBLP:journals/siamcomp/BjorklundH19}. Furthermore, we generalize this problem by adding more constraints on the disjoint paths - that the paths should pass through any given set of edges. This can also be viewed as a problem of finding 2-disjoint cycles, for which we present a randomized parallel algorithm, using the techniques from \cite{8efe47374b8c4cadb1165092ce46518d} and \cite{DBLP:journals/siamcomp/BjorklundH19}.

\begin{theorem}
	\label{thm:main2}
	Let $k \geq 1$ be fixed and $G$ be an undirected graph with $k$ marked vertices. We can find shortest 2-disjoint cycles passing through the marked vertices in $\parity \L / \poly$ (and $\RNC$)
\end{theorem}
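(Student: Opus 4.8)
The plan is to reduce finding shortest $2$-disjoint cycles through the $k$ marked vertices to a constant number of polynomial permanent computations modulo a power of $2$, and then invoke \Cref{thm:main1}. Following the template of \cite{DBLP:journals/siamcomp/BjorklundH19} and \cite{8efe47374b8c4cadb1165092ce46518d}, I would first build a matrix $M(x)$ of univariate integer polynomials from $G$ so that $\perm(M(x)) \Modulo{2^k}$ is a generating function whose degree in $x$ records the total number of edges used by a cycle cover: each edge contributes a factor of $x$, so the \emph{least}-degree surviving term identifies the shortest configuration. The central algebraic fact to establish is that, working modulo $2^k$, the contributions of all cycle covers that are \emph{not} a pair of vertex-disjoint cycles covering the marked set cancel. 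One exhibits a fixed-point-free, parity- and degree-preserving involution on these ``bad'' covers (pairing a cover with the one obtained by reversing or swapping an appropriately chosen component, so that legitimate short cycles of the wrong shape or count are grouped into even classes), so that only the desired $2$-disjoint-cycle configurations are counted with odd multiplicity.

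To enforce that the two cycles together pass through all $k$ marked vertices, I would attach to each marked vertex an indicator variable so that a cover's monomial also records which markers it visits, together with a variable marking the number of nontrivial cycles. Since $k$ is fixed, extracting the coefficient in which every marker appears exactly once and exactly two cycles are present is a constant number of coefficient operations and does not leave $\parity\L$. Because the two cycles must partition the marked set, I would branch over the constantly many partitions of the $k$ markers into the two cycles, and for each partition force the assigned markers to lie on a common nontrivial cycle by the standard vertex-splitting device. Each branch yields one matrix $M(x)$ as above, \Cref{thm:main1} computes its permanent modulo $2^k$ in $\parity\L \subseteq \NC^2$, and taking the lowest $x$-degree over all branches gives the shortest total length.

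What remains is the \emph{search} problem of producing the two cycles, not merely the optimal length. Here I would invoke isolation: assign weights to the edges so that the minimum-length $2$-disjoint-cycle solution is unique. By the Isolation Lemma, random polynomially bounded weights make the optimum unique with high probability, giving the $\RNC$ algorithm, while fixing a good weight assignment as nonuniform advice gives the $\parity\L/\poly$ bound. Once the optimum is isolated, each edge's membership in the unique solution is decided in parallel by recomputing the permanent with that edge deleted (or forced) and testing whether the isolated least-degree coefficient changes, all of which stays within $\parity\L/\poly$ (respectively $\RNC$) by \Cref{thm:main1}.

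The main obstacle I anticipate is the cancellation argument: one must design a single involution that is simultaneously fixed-point-free on every cycle cover failing to be exactly two vertex-disjoint cycles through the markers, preserves the $x$-degree and the marker and cycle-count indicators so that it acts within each extracted coefficient, and flips the mod-$2^k$ sign correctly so that the survivors are precisely the intended configurations. Reconciling this involution with the marker-forcing and with the length weighting, so that no legitimate $2$-disjoint configuration is accidentally cancelled and no spurious one survives, is the delicate technical heart; the remaining reductions and the isolation step are comparatively routine given \Cref{thm:main1}.
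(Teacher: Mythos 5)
Your overall architecture (polynomial-weighted permanents modulo $2^k$ via Theorem~\ref{thm:main1}, cancellation of unwanted cycle covers, isolation for uniqueness and for search) is the same as the paper's, but the mechanism you propose for the cancellation has a genuine gap: the ``variable marking the number of nontrivial cycles'' cannot exist. In a permanent, the exponent of any indeterminate in a cover's monomial is the \emph{sum} of per-edge exponents, and the cycle count of a cover is not an additive function of its edges (already on a triangle with self-loops, the three covers consisting of a $2$-cycle plus a loop and the two $3$-cycles give inconsistent linear equations for any putative per-edge exponents). The paper obtains this bookkeeping from a different source: after deleting self-loops at the marked terminals and fixing orientations of the marked edges via ``pattern'' graphs, every nontrivial cycle whose orientation is not forced contributes a factor of $2$ to the multiplicity of its underlying undirected cover, so the $2$-adic valuation of the coefficient plays exactly the role you wanted the extra variable to play. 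Concretely the paper forms $f_2(x)=\sum_b \perm(A_{P_b})-\sum_c \perm(A_{Q_c})$ and works modulo $4$: covers with three or more nontrivial cycles acquire multiplicity divisible by $4$ and vanish; covers in which the two distinguished edges share a cycle cancel exactly against the $Q_c$-terms via an explicit bijection (rerouting $s_1\to s_2$ and $t_1\to t_2$); and the desired covers with exactly two disjoint cycles survive with coefficient exactly $2$ --- in particular \emph{even}, contradicting your claim that the survivors are counted with odd multiplicity. No single fixed-point-free involution inside one permanent can reproduce this three-way separation between one, two, and more than two cycles; it is the signed sum over patterns together with mod-$4$ (not mod-$2$) arithmetic that does it.

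Two further points. First, ``force the assigned markers to lie on a common nontrivial cycle by the standard vertex-splitting device'' is not a routine device: putting several prescribed vertices on one cycle is precisely Wahlstr\"{o}m's problem, and the paper handles it with the same pattern machinery --- it reduces marked vertices to marked edges by guessing an incident edge per marker ($O(n^k)$ parallel instances, so polynomially many permanents, not constantly many), and then branches over which \emph{pair} of marked edges must be separated (legitimate because each of the two cycles must contain a marked edge), rather than over partitions of the markers. Second, for the $\parity \L / \poly$ bound the advice may depend only on $n$, so ``fixing a good weight assignment'' requires a weight family that isolates simultaneously for \emph{all} $n$-vertex inputs and all choices of marked edges; the paper proves one exists by taking $n^2$ independent weight functions and a union bound over the at most $2^{n^2}$ graphs. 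These last two issues are repairable along the paper's lines, but the cycle-counting mechanism is the step of your proposal that would fail as stated.
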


Finally, we notice that a similar approach gives us an algorithm to compute Hafnians modulo $2^k$ of symmetric matrices of integers. Unfortunately, unlike the case of the permanent, we weren't able to extend this to a parallel algorithm. But nevertheless it gives a direct proof of the fact that counting number of perfect matchings modulo $2^k$, in any general graph, is in $\P$, as proved in \cite{DBLP:journals/cc/BravermanKR09}.

\subsection{Historical Survey}

The problem of computing permanent of an integer matrix was first shown to be $\NP$-hard by Valiant, where he also presented a $O(n^{4k-3})$ running time algorithm to compute permanent modulo $2^k$. It was also shown that computing permanent modulo any odd prime still remains hard. Zanko \cite{doi:10.1142/S0129054191000066} gave a proof for hardness of permanent under many-one reductions strengthening the result from the weaker Turing reductions used by Valiant. Later, Braverman, Kulkarni and Roy \cite{DBLP:journals/cc/BravermanKR09} presented a parallel $\parity \SPACE(k^2 \log n)$ algorithm for computing permanent modulo $2^k$. Bj\"{o}rklund and Husfeldt \cite{DBLP:journals/siamcomp/BjorklundH19} presented a $d^3n^{O(k)}$ time algorithm to compute permanent modulo $2^k$ of matrices over integer polynomials where the entries are of degree atmost $d$.

Finding $k$ disjoint paths in a graph has been a well studied problem in the past: given a graph (undirected/directed) and $k$ pairs of terminals $(s_i, t_i)_{1 \leq i \leq k}$, find $k$ pairwise vertex-disjoint paths $P_i$ from $s_i$ to $t_i$, if they exist.

When $k$ is not fixed (and is part of input) then the problem is known to be $\NP$-hard even for undirected planar graphs \cite{10.1145/1061425.1061430}. Linear time algorithms are known when further restricting directed planar graphs to the case: when all terminals lie on outer face \cite{DBLP:conf/soda/SuzukiAN90}, or when all the $s_i$-terminals lie on one common face while all the $t_i$-terminals lie on another common face \cite{DBLP:journals/ijfcs/Ripphausen-LipaWW96}. If we further ask for paths with minimal total length in the latter problem, then \cite{DBLP:journals/talg/VerdiereS11} presented a $O(kn \log n)$ running time algorithm to achieve the same.

When $k$ is fixed, the problem remains $\NP$-hard for directed graphs, even for $k=2$ \cite{DBLP:journals/tcs/FortuneHW80}, who had also given given a poly time algorithm for the restricted case of directed acyclic graphs. In the restriced setting of directed planar graphs, \cite{DBLP:journals/siamcomp/Schrijver94} presented a $n^{O(k)}$ running time algorithm, which was further improved to a fixed parameter tractable algorithm by \cite{DBLP:conf/focs/CyganMPP13}.

Shifting our focus to undirected graphs, the celebrated work of Robertson and Seymour \cite{ROBERTSON199565} gave a $O(n^3)$ algorithm for finding $k$ disjoint paths in an undirected graph, for any fixed $k$. \cite{DBLP:conf/fsttcs/DattaIK018} gave a parallel algorithm for class of planar graphs where all the terminals lie either on one or two faces. All this while, the question of finding \textit{shortest} disjoint paths in general undirected graphs, remained open for many years until recently, Bj\"{o}rklund and Husfeldt \cite{DBLP:journals/siamcomp/BjorklundH19} gave a polynomial time algorithm for finding the shortest 2-disjoint paths. For general $k$, this problem still remains open. Bj\"{o}rklund and Husfeldt also gave a parallel algorithm to count shortest 2-disjoint paths but only for cubic planar graphs \cite{DBLP:journals/corr/abs-1806-07586}.

\subsection{Our Techniques}

To compute permanent of a matrix $A$ over integer polynomials, we closely follow the analysis of \cite{DBLP:journals/cc/BravermanKR09} but immediately hit an obstacle. They give a reduction from $\perm(A) \Modulo{4}$ to several computations of $\perm(.) \Modulo{2}$, which crucially uses the fact that $\mathbb{Z}_2$ is a field. More precisely, when mimicking the proof, firstly it is required to find a non-trivial solution of $Av = 0$ with the property that atleast one of the entries of this vector is invertible. This fails\footnote{Let $A = \begin{pmatrix} x & x+1 \\ x & x+1 \end{pmatrix}$ then there does not exist any null vector of the form $\begin{pmatrix} f \\ 1 \end{pmatrix}$ or $\begin{pmatrix} 1 \\ f \end{pmatrix}$ for any $f \in \mathbb{Z}_2[x]$} over $\mathbb{Z}_2[x]$. Moreover, their algorithm also uses the fact that a non-singular matrix admits a LU decomposition iff all the leading principal minors are non-zero, which is known to hold in general only for matrices over fields.

Therefore, replacing $\mathbb{Z}$ with $\mathbb{Z}[x]$ in their analysis doesn't work as $\mathbb{Z}_2[x]$ isn't a field. Furthermore, any finite field $\mathbb{F}$ of characteristic 2 only corresponds to modulo 2 arithmetic. We need a way to extend the field structure so that it supports modulo $2^k$ arithmetic as well. If $\mathbb{F}$ was realized as $\mathbb{Z}_2[x]/(p(x))$ where $p(x)$ is irreducible over $\mathbb{Z}_2$ then a possible candidate is the ring $\mathbb{Z}[x]/(2^k,p(x))$. Therefore, the appropriate algebraic structure to consider would be the ring $\mathfrak{R} = \mathbb{Z}[x]/(p(x))$

Now we see that replacing $\mathbb{Z}$ with $\mathfrak{R}$ solves the above mentioned problems in the analysis, primarily because of the fact that $\mathfrak{R} \Modulo {2}$ is a finite char 2 field. With a slight bit of modification in the proof, we achieve that: given a matrix $A$ over $\mathfrak{R}$, we can find $\perm(A) \Modulo{2^k}$ or in other words if $A$ is a matrix over $\mathbb{Z}[x]$, we can compute $\perm(A) \Modulo{2^k, p(x)}$.

We are still not done because our aim was to compute $\perm(A) \Modulo{2^k}$ over $\mathbb{Z}[x]$. To achieve that, we choose $p(x)$ such that its degree is larger than the degree of polynomial $\perm(A)$. This requires doing computations over a large field. Alternatively, we develop a new way of interpolation over $\mathfrak{R}$, which allows us to choose $p(x)$ such that its degree is of logarithmic order of degree $\perm(A)$, but with a tradeoff of computing several (polynomially many) more permanents. We present this technique for its novelty.

Wahlström \cite{8efe47374b8c4cadb1165092ce46518d} addressed the question of finding a cycle passing through given vertices. We ask if we can also find shortest such cycle. And furthermore, can we also find shortest 2-disjoint cycles passing through these vertices? We combine techniques of \cite{8efe47374b8c4cadb1165092ce46518d} and \cite{DBLP:journals/siamcomp/BjorklundH19} to answer the above questions, by reducing them to computing permanents modulo 2 of $2^{k-1}$ and modulo 4 of $2^{k-1} + 2^{k-2}$ matrices respectively. These matrices are adjacency matrix of what we refer to as \textit{pattern} graphs. Notice that for $k=2$ finding shortest 2-disjoint cycles corresponds to finding shortest 2-disjoint paths (by connecting each pair of terminals with a common vertex), and in this case our pattern graphs are exactly those presented in \cite{DBLP:journals/siamcomp/BjorklundH19}.

\subsection{Organization of the Paper}

In section \ref{sec:prelim}, we first introduce the preliminaries and the notation that we shall be using throughout this paper. In the next section \ref{sec:genericPerm}, we present proof of our main theorem \ref{thm:main1} about computing permanent modulo $2^k$, following which we also discuss the complexity of required computations over the ring $\mathfrak{R}$ which shows that our algorithm is in $\parity \L$. We also present an alternative proof for our main theorem in section \ref{sec:interpolation} which uses new techniques. Then we present applications of our result that is finding shortest disjoint cycles in section \ref{sec:sdc}. Section \ref{sec:hafnian} discusses how to apply the same techniques to hafnian and hence it gives an alternate proof of the already known result that counting perfect matchings modulo $2^k$ is in $\P$.

\section{Preliminaries}
\label{sec:prelim}

We begin by stating the definition of the complexity class $\parity \L$.

\begin{definition}
	$\parity \L$ is the class of decision problems solvable by an $\NL$ machine such that
	\begin{itemize}
		\item If the answer is 'yes', then the number of accepting paths is odd.
		\item If the answer is 'no', then the number of accepting paths is even.
	\end{itemize}
\end{definition}

Given a $n \times n$ matrix $A = (a_{ij})_{i,j \in [n]}$, determinant and permanent of $A$ are defined as $$\det(A) = \sum_{\sigma \in S_n} \sgn(\sigma) \prod_{i=1}^n a_{i\sigma(i)} \quad\quad\quad \perm(A) = \sum_{\sigma \in S_n} \prod_{i=1}^n a_{i\sigma(i)}$$

The permanent of a matrix can be regarded as the weighted sum of cycle covers of an undirected graph. This gives a combinatorial interpretation to a seemingly pure algebraic quantity. We shall use this bridge to illustrate an application of our parallel polynomial permanent.

Let $G$ be a weighted undirected graph (not necessarily loopless) with the associated weight function $w$.

\begin{definition}
	We say $C \subseteq V(G) \times V(G)$ is a cycle cover of $G$ if
	\begin{itemize}
		\item $(u,v) \in C \implies \{u,v\} \in E(G)$
		\item $C$ is a union of vertex-disjoint simple directed cycles in $G$
		\item every vertex is incident to some directed edge in $C$
	\end{itemize}
\end{definition}

Note: loops are allowed as simple cycles in the above definition.

\begin{definition}
	For any cycle cover $C$ we denote the weight of $C$ by $\tilde{w}(C) = \prod_{e \in C} w(e)$
\end{definition}

The above definition is well-defined because any directed edge $(u,v)$ or $(v,u)$ in our cycle cover correspond to the same edge $\{u,v\}$ in our underlying undirected graph. And hence both these directed edges get the same weight, that is $w((u,v)) = w((v,u)) = w(\{u,v\})$. In literature, such type of weight functions are commonly referred to as \textit{symmetric} weight functions.

\begin{definition}
	Let $V(G) = [n]$ then we say $A = (a_{ij})_{i,j \in [n]}$ is the $(n \times n)$ adjacency matrix of $G$ if
	$$a_{ij} = \begin{cases}
	w(e) &\quad \text{if } e = \{i,j\} \in E(G) \\
	0 &\quad \text{otherwise}
	\end{cases}$$
\end{definition}

\begin{observation}
	$\perm(A) = \sum \tilde{w}(C)$ where the sum is taken over all cycle covers $C$ of $G$
\end{observation}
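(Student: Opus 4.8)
The plan is to establish the observation by a bijective, term-by-term matching between the monomials appearing in the expansion of $\perm(A)$ and the cycle covers of $G$, exactly mirroring the standard correspondence between permutations and disjoint-cycle decompositions. First I would recall that $\perm(A) = \sum_{\sigma \in S_n} \prod_{i=1}^n a_{i\sigma(i)}$, and that every permutation $\sigma \in S_n$ decomposes uniquely into a product of disjoint cycles on the vertex set $[n] = V(G)$. Reading each cycle $(v_1\, v_2\, \cdots\, v_\ell)$ of $\sigma$ as the directed cycle $v_1 \to v_2 \to \cdots \to v_\ell \to v_1$, and a fixed point $\sigma(i) = i$ as a loop at $i$, this gives a candidate set $C_\sigma \subseteq V(G) \times V(G)$ that already satisfies the second and third clauses of the cycle-cover definition: the cycles are vertex-disjoint and every vertex is incident to exactly one outgoing directed edge.

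Next I would observe how the adjacency matrix kills exactly the non-edges. The term $\prod_{i=1}^n a_{i\sigma(i)}$ is nonzero precisely when $a_{i\sigma(i)} \neq 0$ for every $i$, i.e.\ when $\{i,\sigma(i)\} \in E(G)$ for all $i$; by the definition of $A$ this is exactly the first clause $(u,v) \in C \implies \{u,v\} \in E(G)$. Hence the permutations contributing a nonzero term are precisely those whose associated edge set $C_\sigma$ is a genuine cycle cover of $G$, and $\sigma \mapsto C_\sigma$ is a bijection between such permutations and the cycle covers of $G$. For each such $\sigma$, the surviving term evaluates to $\prod_{i=1}^n a_{i\sigma(i)} = \prod_{i=1}^n w(\{i,\sigma(i)\}) = \prod_{e \in C_\sigma} w(e) = \tilde{w}(C_\sigma)$, where the middle equality uses that the directed edges $(i,\sigma(i))$ range over exactly the edges of $C_\sigma$ and that $w$ is symmetric, so the orientation is irrelevant. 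Summing over all contributing $\sigma$ then yields $\perm(A) = \sum_C \tilde{w}(C)$.

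The only point requiring a little care, and thus the place I would expect to spend a sentence guarding against a subtlety, is the treatment of loops and length-$2$ cycles under the symmetric weight convention. A fixed point of $\sigma$ contributes the diagonal entry $a_{ii}$, which is the weight of the loop $\{i,i\}$ if present and $0$ otherwise, matching the convention that loops are allowed as simple cycles; this is unproblematic since each loop is a single directed edge contributing a single factor. A transposition $(u\,v)$ in $\sigma$ contributes both $a_{uv}$ and $a_{vu}$ to the product, i.e.\ the factor $w(\{u,v\})^2$, which is consistent because the corresponding directed $2$-cycle $u \to v \to u$ contains the two directed edges $(u,v)$ and $(v,u)$, both of which lie in $C$ and both of which $\tilde{w}$ counts — again using $w((u,v)) = w((v,u)) = w(\{u,v\})$. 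Once this bookkeeping is checked, the equality is immediate and the observation follows.
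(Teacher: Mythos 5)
Your proof is correct and is exactly the standard permutation-to-cycle-cover correspondence that the paper itself relies on: the paper states this as an Observation with no proof at all, treating it as immediate. Your care with the 2-cycle case (both directed edges $(u,v)$ and $(v,u)$ lying in $C$, so $\tilde{w}$ contributes $w(\{u,v\})^2$ matching $a_{uv}a_{vu}$) and with loops is precisely the bookkeeping that justifies the paper's symmetric-weight convention, so nothing is missing.
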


\section{Permanent over $\mathfrak{R}$ Mod $2^k$}
\label{sec:genericPerm}

To begin with, we fix some general notation. Let $p(x)$ be an irreducible polynomial over $\mathbb{Z}_2[x]$ such that $\deg(p(x))$ is atmost $\poly(n)$. Denote by $\mathbb{F}$ the finite field of char 2, which is realized as $\mathbb{Z}_2[x]/(p(x))$ and by $\mathfrak{R}_k = \mathbb{Z}[x]/(2^k, p(x))$. In particular $\mathfrak{R}_1 \cong \mathbb{F}$. Now as already discussed, we essentially replace $\mathbb{Z}$ by $\mathfrak{R}$ in the algorithm of \cite{DBLP:journals/cc/BravermanKR09}. We are ready to state the main theorem.

\begin{theorem}
	\label{lemma:genericPerm}
	Let $k \geq 1$ be fixed and $A \in \mathfrak{R}^{n \times n}$. We can compute $\perm(A) \Modulo{2^k}$ in $\parity \L$
\end{theorem}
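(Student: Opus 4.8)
The plan is to induct on $k$, reducing $\perm(A) \Modulo{2^k}$ to a polynomial number of permanent computations $\Modulo{2^{k-1}}$ of matrices over $\mathfrak{R}$, following the strategy of \cite{DBLP:journals/cc/BravermanKR09} but carried out over $\mathfrak{R}$ rather than $\mathbb{Z}$. For the base case $k=1$ I work in $\mathfrak{R}_1 \cong \mathbb{F}$: since $\mathbb{F}$ has characteristic $2$, the signs in the two defining sums coincide and $\perm(A) = \det(A)$ in $\mathbb{F}$. As $\mathbb{F}$ is a finite field of characteristic $2$ with $\deg(p) \le \poly(n)$, its arithmetic and the determinant over it are computable in $\parity \L$ (this is exactly where one needs field operations over $\mathbb{F}$ to lie in $\parity\L$), which settles $k=1$.

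For the inductive step I first reduce $A$ modulo $2$ to a matrix $\bar{A}$ over $\mathbb{F}$ and decide, by Gaussian elimination over $\mathbb{F}$, whether $\bar{A}$ is singular, all in $\parity\L$. Write $A_1,\dots,A_n$ for the columns of $A$ and $A_{j\to u}$ for the matrix obtained from $A$ by replacing its $j$-th column with a vector $u$. If $\bar{A}$ is singular, I find a null vector $\bar{v}$ of $\bar{A}$ over $\mathbb{F}$; because $\mathbb{F}$ is a field I may normalise so that some coordinate $\bar{v}_j = 1$ is a unit. Lifting to $v$ over $\mathfrak{R}$ with $v_j = 1$, the relation $\sum_i v_i A_i \equiv 0 \Modulo{2}$ lets me write $A_j = -\sum_{i\ne j} v_i A_i + 2w$ for a vector $w$ well defined $\Modulo{2^{k-1}}$. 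Multilinearity of the permanent in column $j$ then gives
\begin{equation*}
\perm(A) = \sum_{i \ne j} (-v_i)\,\perm\big(A_{j \to A_i}\big) + 2\,\perm\big(A_{j \to w}\big).
\end{equation*}
Each matrix $A_{j\to A_i}$ in the first sum has two equal columns, so pairing permutations via the involution that exchanges the two rows assigned to those columns shows its permanent is even; writing each as $2q_i$ and expanding $q_i$ through the same involution as a sum of $\binom{n}{2}$ permanents of $(n-2)\times(n-2)$ matrices, the entire right-hand side becomes $2$ times a quantity that need only be known $\Modulo{2^{k-1}}$. Thus $\perm(A)\Modulo{2^k}$ reduces to polynomially many permanents $\Modulo{2^{k-1}}$ of matrices over $\mathfrak{R}$, each produced from $A$ in $\parity\L$.

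If $\bar{A}$ is nonsingular, I follow the companion reduction of \cite{DBLP:journals/cc/BravermanKR09}: after permuting rows and columns (which leaves the permanent unchanged) so that every leading principal minor of $\bar{A}$ is nonzero, the field structure of $\mathbb{F}$ guarantees an $LU$ decomposition, which lifts to a factorization over $\mathfrak{R}$ since the pivots are units. This reduces $\perm(A)\Modulo{2^k}$ to its determinant, obtained from the triangular factors in $\parity\L$, together with an even correction term that is again a permanent evaluated $\Modulo{2^{k-1}}$. Both the null-space computation and the $LU$ decomposition are precisely the steps that fail over $\mathbb{Z}_2[x]$ and become available exactly because $\mathfrak{R}\Modulo{2} = \mathbb{F}$ is a field; this is the heart of the argument.

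It remains to bound the complexity, which I expect to be the main obstacle beyond engineering the two case reductions. Every per-node operation --- reduction $\Modulo{2}$, arithmetic in $\mathbb{F}$ and in $\mathfrak{R}$, rank and null-space computation over $\mathbb{F}$, the determinant over $\mathbb{F}$, and forming the child matrices --- lies in $\parity\L$. The recursion has depth $k-1$ and polynomial branching, so it is a tree of $n^{O(k)} = \poly(n)$ nodes, each carrying a $\parity\L$ computation that consumes the outputs of its children. The delicate point is that this constant-depth composition must not leave the class: here I would invoke the closure of $\parity\L$ under $\parity\L$-Turing reductions (that is, $\parity\L$ is low for itself), so that composing a number of $\parity\L$ stages bounded by the fixed constant $k$ stays in $\parity\L$; the containment $\parity\L \subseteq \NC^2$ then yields the stated bound as well.
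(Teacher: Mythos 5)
Your base case and your singular case are essentially the paper's own proof (you work with columns where the paper works with rows, which is immaterial): multilinearity in the column containing the lifted null vector, plus the fact that a matrix with two equal columns has even permanent, reduces $\perm(A) \Modulo{2^k}$ to polynomially many permanents $\Modulo{2^{k-1}}$. The genuine gap is your nonsingular case. You claim that once all leading principal minors of $\bar{A}$ are nonzero, the resulting $LU$ factorization over $\mathfrak{R}$ ``reduces $\perm(A)\Modulo{2^k}$ to its determinant \ldots together with an even correction term that is again a permanent evaluated $\Modulo{2^{k-1}}$.'' No such reduction exists as stated: the permanent, unlike the determinant, is not multiplicative, so nothing about $\perm(A)$ can be read off triangular factors; and while $\perm(A)-\det(A)$ is indeed even, the quantity $\tfrac{1}{2}\left(\perm(A)-\det(A)\right)=\sum_{\sigma\ \mathrm{odd}}\prod_i a_{i\sigma(i)}$ is not the permanent of any matrix you have constructed --- knowing it $\Modulo{2^{k-1}}$ is exactly as hard as knowing $\perm(A)\Modulo{2^k}$ given $\det(A)$, so this step is circular. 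In the paper (and in Braverman--Kulkarni--Roy), the $LU$ criterion is invoked only to certify that, after multiplying by a permutation $Q$, every leading principal minor is a unit; the factorization itself is never used to evaluate the permanent.

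What the nonsingular case actually requires is the perturbation trick: since $A[\widehat{\{n\}},\widehat{\{n\}}]$ is nonsingular over $\mathbb{F}$, replace $a_{nn}$ by $a_{nn}+y_0$ where $y_0=\perm(A)\perm(A[\widehat{\{n\}},\widehat{\{n\}}])^{-1}$ is computed over $\mathbb{F}$, so the perturbed matrix $C$ satisfies $\perm(C)\equiv 0 \Modulo{2}$; then $\perm(A)=\perm(C)-y_0\perm(A[\widehat{\{n\}},\widehat{\{n\}}])$, where $\perm(C)$ is handled by the singular case and the minor recurses. Because all leading principal minors are units, this recursion telescopes into $n$ singular-case instances solvable in parallel, which is precisely what makes the algorithm $\parity\L$ rather than merely sequential polynomial time. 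A secondary soft spot: you dismiss singularity testing as ``Gaussian elimination over $\mathbb{F}$ in $\parity\L$'' and assert the remaining $\mathbb{F}$-primitives; Gaussian elimination is inherently sequential, and establishing determinant, rank, null vectors and inversion over $\mathbb{F}=\mathbb{Z}_2[x]/(p(x))$ in $\parity\L$ (via Mahajan--Vinay, iterated matrix products over $\mathbb{Z}_2[x]$, the Hesse--Allender--Barrington division lemma, Fich--Tompa inversion, and Mulmuley's rank algorithm) is a substantial part of the paper's proof rather than a one-line citation.
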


Proof is by induction on $k$. We start with the base case $k=1$. Note that $\perm(A) \equiv \det(A) \Modulo{2}$. Using corollary \ref{lemma:determinantoverF} we can find $\perm(A) \Modulo{2}$ in $\parity \L$. Now suppose $k > 1$, we shall reduce it to computing several such determinants modulo 2, all of which can be computed in parallel. In doing so, we first illustrate an algorithm which is sequential and then we shall see how to parallelize it.

\subsection{Sequential algorithm for computing permanent modulo $2^k$}

We present the algorithm from \cite{DBLP:journals/cc/BravermanKR09} for computing permanent but translated within our framework. Let $A = (a_{ij})_{i,j \in [n]} \in \mathfrak{R}^{n \times n}$ be such that $\det(A) \equiv 0 \Modulo{2}$. Therefore we can find a non-zero vector $v \in \mathbb{F}^n$ such that $A^Tv = 0$ over $\mathbb{F}$. Assume without loss of generality $v_1 = 1$.

Let $r_i$ denote the $i^{th}$ row of $A$ and define $A'$ to be the matrix where the $1^{st}$ row in matrix $A$ is replaced with $\sum_i v_i r_i$. Now if we expand the permanent along the first row then we get
\begin{align}
\perm(A') = \sum_{i=1}^n v_i\perm(A[1 \leftarrow i]) = \perm(A) + \sum_{i=2}^n v_i\perm(A[1 \leftarrow i]) \label{eq:1}
\end{align}
where $A[i \leftarrow j]$ is the matrix $A$ but with $i^{th}$ row replaced with the $j^{th}$ row. For $I,J \subseteq [n]$ denote by $A[\widehat{I},\widehat{J}]$ the matrix obtained from $A$ by deleting rows indexed by $I$ and columns indexed by $J$. With this equation, modulo $2^k$ computation reduces to modulo $2^{k-1}$ computations of the minors as follows:
\begin{align*}
\perm(A') = \sum_{j=1}^n \left(\sum_{i=1}^n v_i a_{ij} \right) \perm(A[\widehat{\{1\}},\widehat{\{j\}}])
\end{align*}
Since $A^Tv=0 \Modulo{2}$, we can write $\sum_{i} v_i a_{ij} = 2b_j \Modulo{2^k}$ for some $b_j \in \mathfrak{R}_k$,  therefore, we can re-write the above permanents as:
\begin{align*}
\perm(A') \Modulo{2^k} &= 2 \left(\sum_{j=1}^n b_j \perm(A[\widehat{\{1\}},\widehat{\{j\}}]) \Modulo{2^{k-1}} \right)
\end{align*}
Similarly, expanding $\perm(A[1 \leftarrow i])$ along the $1^{st}$ and $i^{th}$ rows, we get the reduction:
\begin{align*}
&\perm(A[1 \leftarrow i]) = \sum_{j \neq k} a_{ij}a_{ik} \perm(A[\widehat{\{1,i\}},\widehat{\{j,k\}}]) \\
&\perm(A[1 \leftarrow i]) \Modulo{2^k} = 2 \left(\sum_{j < k} a_{ij}a_{ik} \perm(A[\widehat{\{1,i\}},\widehat{\{j,k\}}]) \Modulo{2^{k-1}}\right)
\end{align*}
Substituting these equations back in \ref{eq:1}, we get
\begin{equation*}
\begin{split}
\perm(A) \Modulo{2^k} &= 2 \left(\sum_{j=1}^n b_j \perm(A[\widehat{\{1\}},\widehat{\{j\}}]) \Modulo{2^{k-1}} \right) \\
& - 2 \sum_{i=2}^n v_i \left(\sum_{\substack{j,k = 1 \\ j<k}}^n a_{ij}a_{ik} \perm(A[\widehat{\{1,i\}},\widehat{\{j,k\}}]) \Modulo{2^{k-1}}\right)
\end{split}
\end{equation*}

Since addition and multiplication over $\mathfrak{R}_k$ is in $\parity \L$ (see \ref{lemma:arithmetic}) we get that $\perm(A) \Modulo{2^k}$ $\parity \L$-reduces to $\perm(.) \Modulo{2^{k-1}}$. Hence by induction, we can compute $\perm(A) \Modulo{2^k}$ in $\parity \L$, provided that $\perm(A) \equiv 0 \Modulo{2}$.

Let us see how to drop this assumption. We expand the permanent of $A$ along the $i^{th}$ row, then 
\begin{align*}
\perm(A) = \sum_{j} a_{ij} \perm(A[\widehat{\{i\}},\widehat{\{j\}}])
\end{align*}
If $\perm(A) \not \equiv 0 \pmod{2}$, then $\exists i,j$ such that $\perm(A[\widehat{\{i\}},\widehat{\{j\}}]) \not \equiv 0 \pmod{2}$.
Consider the matrix $C$ where all entries are same as $A$ except the $(i,j)^{th}$ entry which is replaced with $a_{ij}+y$. Then, we get $\perm(C) = \perm(A) + y\perm(A[\widehat{\{i\}},\widehat{\{j\}}])$. Notice that $\perm(A) + y\perm(A[\widehat{\{i\}},\widehat{\{j\}}]) \equiv 0 \pmod{2}$ is a linear equation in $y$ over the field $\mathbb{F}$ and so there exists a unique $y$ which satisfies this equation, which is $y_0  = \perm(A)\perm(A[\widehat{\{i\}},\widehat{\{j\}}])^{-1}$. Setting $y = y_0$ we get $\perm(C) \equiv 0 \Modulo{2}$, so we can compute $\perm(C) \pmod{2^k}$ and then compute $\perm(A[\widehat{\{i\}},\widehat{\{j\}}])$ recursively as $A[\widehat{\{i\}},\widehat{\{j\}}]$ is a smaller $(n-1) \times (n-1)$ size matrix. Hence we obtain $\perm(A) = \perm(C) - y_0\perm(A[\widehat{\{i\}},\widehat{\{j\}}]) \pmod{2^k}$. This yields a sequential algorithm for computing permanent modulo $2^k$ over $\mathfrak{R}$.

\subsection{Parallel algorithm for computing permanent modulo $2^k$}

The bottleneck was finding $i,j$ such that $A[\widehat{\{i\}},\widehat{\{j\}}]$ is non-singular over $\mathbb{F}$. We fix this by again appealing to the fact that we are working over a field, and modifying $A$ such that all leading principal minors are non-zero. This modification essentially derives from the following fact.

\begin{theorem} (\cite{2005math......6382O} Corollary 1)
	Let $A$ be an invertible matrix over a field $\mathbb{F}$, then all leading principal minors are non-zero iff $A$ admits an $LU$ decomposition
\end{theorem}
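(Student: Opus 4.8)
The plan is to prove the two directions separately, treating the statement as the standard existence criterion for an unpivoted $LU$ factorization, where I take $L$ to be unit lower triangular and $U$ upper triangular.

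For the forward direction (an $LU$ decomposition implies all leading principal minors are non-zero), I would argue by a block computation. Writing $A = LU$ and partitioning both factors conformally at index $k$, the $(1,1)$ block of the product is exactly the leading $k \times k$ submatrix $A_k$, and it factors as the product of the top-left blocks of $L$ and $U$, both of which are triangular of size $k$. Hence $\det(A_k)$ equals the product of the first $k$ diagonal entries of $L$ times the product of the first $k$ diagonal entries of $U$. Since $A$ is invertible, $\det(L)\det(U) \neq 0$, so every diagonal entry of $U$ is non-zero (and those of $L$ are $1$); this forces $\det(A_k) \neq 0$ for every $k$.

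The substantive direction, and the step I expect to be the main obstacle, is the converse. Here I would induct on $n$. The base case $n = 1$ is immediate, since the $1 \times 1$ minor $a_{11}$ is non-zero. For the inductive step, partition
$$A = \begin{pmatrix} A_{n-1} & b \\ c^T & d \end{pmatrix},$$
where $A_{n-1}$ is the leading $(n-1)\times(n-1)$ principal submatrix. All leading principal minors of $A_{n-1}$ are also leading principal minors of $A$, hence non-zero, so by the inductive hypothesis $A_{n-1} = L'U'$ with $L'$ unit lower triangular and $U'$ upper triangular; moreover $\det(A_{n-1}) \neq 0$ gives $\det(U') \neq 0$, so both $U'$ and $L'$ are invertible. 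I then seek a factorization
$$A = \begin{pmatrix} L' & 0 \\ \ell^T & 1 \end{pmatrix}\begin{pmatrix} U' & u \\ 0 & \mu \end{pmatrix},$$
and matching the four blocks yields $u = (L')^{-1}b$, $\ell^T = c^T (U')^{-1}$, and $\mu = d - \ell^T u$. Each of these is solvable precisely because $L'$ and $U'$ are invertible over $\mathbb{F}$, which completes the factorization of $A$.

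The crucial use of the field hypothesis is exactly in this solvability: the entries of $u$, $\ell$, and $\mu$ are obtained by dividing by the diagonal pivots of $U'$, which are non-zero but need not be units over a general ring. This is precisely why the argument, and hence the whole reduction of the permanent computation to the case of non-zero leading principal minors, requires passing to the field $\mathbb{F}$ rather than working directly over $\mathfrak{R}$ or $\mathbb{Z}_2[x]$.
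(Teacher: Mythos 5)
Your proof is correct. Note, however, that the paper does not prove this statement at all: it is imported wholesale as Corollary~1 of the cited reference \cite{2005math......6382O}, and the text immediately moves on to use the companion fact that every invertible matrix admits a $PLU$ factorization. So there is no in-paper argument to compare against; what you have supplied is the standard self-contained proof, and it is sound in both directions. The necessity direction correctly reduces $\det(A_k)$ to the product of the first $k$ diagonal entries of the two triangular factors, and this is exactly where the hypothesis that $A$ is invertible is needed (without it the direction is false: the zero matrix admits an $LU$ factorization while all its minors vanish). The existence direction is the usual bordered/Schur-complement induction, solving $L'u = b$, $\ell^T U' = c^T$, and $\mu = d - \ell^T u$, which is legitimate precisely because $L'$ and $U'$ are invertible over a field. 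One small presentational point: you announce at the outset that $L$ is unit lower triangular, but the cited statement places no such normalization on $L$; fortunately your forward-direction argument never actually uses it (invertibility of $A$ already forces all diagonal entries of both factors to be non-zero), so it would be cleaner to drop that assumption for that direction and keep it only in the construction. Your closing remark is also well taken and matches the paper's own motivation in its techniques section: the only place the field structure enters is in dividing by the pivots, which is exactly what fails over $\mathbb{Z}_2[x]$ and why the paper passes to $\mathbb{F} = \mathbb{Z}_2[x]/(p(x))$.
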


Every invertible matrix admits a $PLU$ factorization \cite{2005math......6382O} so let $A = PLU$. Denote by $Q = P^{-1}$, then $QA = LU$. Since $Q$ is also a permutation matrix, we get that $\perm(QA) = \perm(A)$ (because permanent is invariant under row swaps). Therefore, it suffices to give a $\parity \L$ algorithm to find $Q$ so that we can replace $A$ by $QA$ which is an invertible matrix such that all leading principal minors are non-zero. Thus computing $\perm(A) \Modulo{2^k}$ reduces to the problem of computing (in parallel) permanent modulo $2^k$ of $n-1$ matrices with $\perm \equiv 0 \Modulo{2}$. This gives a $\parity \L$ algorithm to compute permanent modulo $2^k$ over $\mathfrak{R}$.

To find $Q$, we closely follow \cite{218278}. For each $1 \leq i \leq n$, let $A_i$ be the matrix formed from $A$ by only taking the first $i$ columns. Let $A_i^j$ matrix obtained from $A_i$ by only taking the first $j$ rows. We construct a set $S_i \subseteq [n]$ inductively as follows:

\begin{itemize}
	\item Base case: $l \in S_i$ if $rank(A_i^l) = 1$ and $rank(A_i^k) = 0$ for all $k < l$
	\item Include $j \in S_i$ iff $rank(A_i^j) = 1 + rank(A_i^{j-1})$
\end{itemize}

Since $rank(A_i) = i$, we get $|S_i|=i$. Furthermore note that $S_i \subset S_{i+1}$. So let $S_1 = \{s_1\}$ and for each $i \geq 2$, denote by $s_i \in S_i \setminus S_{i-1}$. Consider the following permutation $Q = (n,s_n)\ldots(2,s_2)(1,s_1)$. Thus $Q$ is our desired permutation, such that $QA$ has all leading principal minors non-zero.

As a corollary, we immediately get our desired result.

\begin{corollary} (Theorem \ref{thm:main1} restated)
	Given a $n \times n$ matrix $A = (a_{ij})_{i,j \in [n]}$ over $\mathbb{Z}[x]$ with $\deg(a_{ij})$ atmost $\poly(n)$, we can compute $\perm(A) \Modulo{2^k}$ in $\parity \L$ for any fixed $k \geq 1$
\end{corollary}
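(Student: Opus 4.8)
The plan is to reduce this statement directly to Theorem~\ref{lemma:genericPerm}. That theorem, instantiated with the ring $\mathfrak{R} = \mathbb{Z}[x]/(p(x))$ for an irreducible $p(x)$ over $\mathbb{Z}_2$, computes $\perm(A) \Modulo{2^k}$ over $\mathfrak{R}$, which is precisely $\perm(A) \Modulo{2^k, p(x)}$. So the only thing left to do is to recover the genuine polynomial $\perm(A) \Modulo{2^k} \in \mathbb{Z}_{2^k}[x]$ from its image modulo $p(x)$. The idea is to choose $p(x)$ monic of degree strictly larger than $\deg(\perm(A))$: then $\perm(A) \Modulo{2^k}$, being a polynomial of degree below $\deg(p(x))$, is already its own canonical representative modulo the monic $p(x)$, so reduction modulo $p(x)$ discards nothing and the output of Theorem~\ref{lemma:genericPerm} is literally $\perm(A) \Modulo{2^k}$.

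First I would bound the degree. Each term $\prod_i a_{i\sigma(i)}$ of the permanent is a product of $n$ entries, each of degree at most $\poly(n)$, so $\deg(\perm(A)) \leq n \cdot \poly(n) = \poly(n)$. Hence it suffices to produce an irreducible $p(x) \in \mathbb{Z}_2[x]$ with $\deg(p(x)) > \deg(\perm(A))$, and this required degree is itself $\poly(n)$, which is exactly the size budget allowed in Theorem~\ref{lemma:genericPerm}.

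Next I would exhibit such a $p(x)$ explicitly, so that no search is needed. Since $2$ is a primitive root modulo every power of $3$, the cyclotomic polynomial $\Phi_{3^{m+1}}(x) = x^{2 \cdot 3^m} + x^{3^m} + 1$ stays irreducible over $\mathbb{Z}_2$ for every $m \geq 0$, giving an explicit family of irreducible (monic) trinomials of degrees $2 \cdot 3^m$. Choosing the least $m$ with $2 \cdot 3^m > \deg(\perm(A))$ yields $\deg(p(x)) \leq 3\deg(\perm(A)) = \poly(n)$, and writing down this sparse $p(x)$ is trivial. Feeding $A$ together with this $p(x)$ into Theorem~\ref{lemma:genericPerm} then outputs $\perm(A) \Modulo{2^k}$ in $\parity\L$, using that arithmetic over $\mathfrak{R}_k$ is in $\parity\L$ (Lemma~\ref{lemma:arithmetic}).

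The main obstacle is the polynomial degree of $p(x)$: it forces the computation to take place over the exponentially large field $\mathbb{F} = \mathbb{Z}_2[x]/(p(x))$, and one must be sure that every field operation invoked inside Theorem~\ref{lemma:genericPerm} --- inverses, solving $A^T v = 0$, and the rank computations used to build the $PLU$ permutation --- remains in $\parity\L$ at this size. This is exactly what Lemma~\ref{lemma:arithmetic} and the earlier linear-algebra reductions are there to guarantee. I would also flag that this degree blowup is avoidable: the interpolation scheme of Section~\ref{sec:interpolation} lets one take $\deg(p(x))$ only logarithmic in $\deg(\perm(A))$, trading the large field for polynomially many permanent computations over a small field.
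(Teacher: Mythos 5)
Your proposal is correct and matches the paper's own proof essentially step for step: both bound $\deg(\perm(A))$ by $n\cdot\max\deg(a_{ij})$, pick the same explicit family of irreducible trinomials $x^{2\cdot 3^m}+x^{3^m}+1$ of degree just above that bound (the paper cites van Lint's Theorem~1.1.28 where you invoke the cyclotomic/primitive-root argument, but it is the same fact), and observe that since $\deg(p) > \deg(\perm(A))$ the output of Theorem~\ref{lemma:genericPerm} is literally $\perm(A) \Modulo{2^k}$. Your closing remark about the interpolation alternative likewise mirrors the paper's Section~\ref{sec:interpolation}.
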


\begin{proof}
	Let $N = n \max \{\deg(a_{ij})\} + 1$ and choose $l = \ceil{\log_3 (N/2)}$. Consider $p(x) = x^{2.3^l} + x^{3^l} + 1$ which is irreducible over $\mathbb{Z}_2[x]$ (see \cite{van2013introduction} Theorem 1.1.28)
	
	Since $\deg(p(x)) \geq N > \deg(\perm(A))$, using this $p(x)$ in above theorem, we get $\perm(A) \Modulo{2^k}$ for any fixed $k$.
\end{proof}

\subsection{Complexity Analysis}

We discuss the complexity results for arithmetic operations over the ring $\mathfrak{R}_k$ and matrix operations over the field $\mathbb{F}$, which were required in our above algorithm. To begin with, we state a well-known fact about integer polynomials matrix multiplication modulo $2$. This shall form our basis for showing computations over $\mathbb{F}$ in $\parity \L$.

\begin{lemma} (Folklore \cite{DAMM})
	\label{lemma:matrixExpo01}
	Let $A_1, A_2, \ldots A_n \in \mathbb{Z}_2[x]^{n \times n}$ then the product $A_1A_2 \ldots A_n$ can be computed in $\parity \L$
\end{lemma}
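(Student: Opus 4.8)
The plan is to reduce the problem of multiplying $n$ matrices over $\mathbb{Z}_2[x]$ to iterated multiplication of integer (Boolean) matrices, which is the canonical complete problem for $\parity\L$. The key observation is that $\parity\L$ counts the parity of the number of paths in a layered graph, and this is exactly what iterated Boolean matrix product over $\mathbb{Z}_2$ computes. So first I would recall that iterated matrix multiplication over $\mathbb{Z}_2$, i.e.\ computing a designated entry (or all entries) of $A_1 A_2 \cdots A_n$ for $0/1$ matrices $A_i$, is complete for $\parity\L$; this is the classical characterization due to Damm, Buntrock--Damm--Hertrampf--Meinel and others. The entire proof then consists of showing that the polynomial case reduces to this scalar $\mathbb{Z}_2$ case in an $\parity\L$-computable way.

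The main step is to eliminate the indeterminate $x$. Each entry of each $A_i$ is a polynomial over $\mathbb{Z}_2$ of degree at most $d = \poly(n)$ (since the degrees are polynomially bounded and they add up under multiplication, the final product has degree at most $nd = \poly(n)$). The standard device is to replace each polynomial $a_{ij}(x) = \sum_{t} c_t x^t$ by its coefficient vector and encode multiplication of degree-$\le D$ polynomials as multiplication of $(D{+}1)\times(D{+}1)$ (lower-triangular, banded) \emph{convolution} matrices over $\mathbb{Z}_2$. Concretely, I would blow up each $n\times n$ polynomial matrix $A_i$ into a scalar matrix $\hat{A_i}$ of dimension $n(D{+}1)$ over $\mathbb{Z}_2$, where the $(i,j)$ block is the convolution (Toeplitz) matrix of the polynomial $a_{ij}(x)$ truncated at degree $D := nd$. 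One checks that block multiplication of the $\hat{A_i}$ corresponds exactly to polynomial matrix multiplication followed by truncation at degree $D$, and since the true product has degree at most $D$ no information is lost. Thus $\widehat{A_1 A_2 \cdots A_n} = \hat{A_1}\hat{A_2}\cdots\hat{A_n}$, reading off the block entries of the right-hand scalar product recovers the coefficients of the polynomial product. The new dimension $n(D{+}1) = \poly(n)$, so this is a legitimate polynomial-size instance of iterated $0/1$ matrix multiplication.

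I expect the main obstacle to be verifying, carefully and in the right complexity, that the \emph{construction} of the blown-up matrices $\hat{A_i}$ is itself cheap enough (logspace-computable / projection-like) not to dominate the count, and that reading off a single output bit of the $\parity\L$ product corresponds to the parity of accepting paths. The entries of $\hat{A_i}$ are $0/1$ values determined by a simple arithmetic predicate on the indices (namely whether the relevant coefficient of $a_{ij}$ is $1$ and whether the row/column offsets match in the Toeplitz band), so each bit of $\hat{A_i}$ is computable from the input in $\L$ (indeed in $\DLOGTIME$-uniform fashion), and composing an $\L$-computable reduction with a $\parity\L$ problem stays in $\parity\L$. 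The only genuinely delicate point is bookkeeping the degree bound $D = nd$ and the block indexing so that the Toeplitz convolution faithfully models polynomial multiplication over $\mathbb{Z}_2$ without overflow or wraparound; once that identity is pinned down the rest is immediate. Since the resulting iterated $0/1$ matrix product is exactly the $\parity\L$-complete problem, the lemma follows.
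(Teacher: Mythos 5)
Your proof is correct and coincides with the construction the paper itself gives: the paper treats this lemma as folklore but proves exactly your reduction in Theorem~\ref{theorem:matrixmultiply}, where each polynomial entry is replaced by its lower-triangular Toeplitz (convolution) matrix $P(f)$, the identity $P(fg)=P(f)P(g)$ makes block multiplication simulate polynomial matrix multiplication, and the logspace-computable blow-up to dimension $\poly(n)$ reduces everything to iterated $\mathbb{Z}_2$ matrix product, which is the canonical $\parity\L$ problem. The remark following that theorem in the paper draws precisely your conclusion, so your argument is essentially the paper's own.
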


To obtain an analogous result over $\mathbb{F}$ we first perform multiplication over $\mathbb{Z}_2[x]$ and then divide all entries by $p(x)$, using the following polynomial division, as demonstrated by Hesse, Allender and Barrington in \cite{HAB}, to get that iterated matrix product over $\mathbb{F}$ is in $\parity \L$

\begin{lemma}(\cite{HAB} Corollary 6.5)
	\label{lemma:polydivision}
	Given $g(x), p(x) \in \mathbb{Z}[x]$ of degree atmost $\poly(n)$, we can compute $g(x) \Modulo{p(x)}$ in $\DLOGTIME-\text{uniform } \TC^0 \subseteq \L$
\end{lemma}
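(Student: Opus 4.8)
Lemma \ref{lemma:polydivision} (from HAB Corollary 6.5): Given polynomials $g(x), p(x) \in \mathbb{Z}[x]$ of degree at most $\poly(n)$, compute $g(x) \bmod p(x)$ in DLOGTIME-uniform $\TC^0 \subseteq \L$.

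Let me think about how to prove this.

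The plan is to reduce the computation of $g(x) \Modulo{p(x)}$ to a constant number of arithmetic operations on $\poly(n)$-bit integers, and then to invoke the theorem of Hesse, Allender and Barrington that iterated integer multiplication and integer division lie in $\DLOGTIME$-uniform $\TC^0$. First I would reduce to the monic case: in the setting of this paper the divisor $p(x)$ is monic, so the quotient $q(x)$ and remainder $r(x)$ satisfying $g = qp + r$ with $\deg r < \deg p$ are uniquely determined over $\mathbb{Z}$. I would obtain $q(x)$ by the standard reversal trick, computing the power-series inverse of the reversed divisor $\hat p(x) = x^{\deg p} p(1/x)$ modulo a suitable power of $x$ (this is well defined since $\hat p(0) = 1$), multiplying by the reversed dividend, and reversing back; the remainder is then recovered by a single subtraction $r = g - qp$.

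To move this to integers I would use Kronecker substitution, encoding a polynomial with $\poly(n)$-bit coefficients as the integer obtained by evaluating at $x = 2^m$ for an $m = \poly(n)$ chosen large enough that adjacent coefficient blocks never interfere. Polynomial multiplication then becomes a single integer multiplication, which is in $\TC^0$. The one subtlety is \textbf{sign handling}: coefficients of the quotient, of the power-series inverse, and of intermediate products may be negative, which corrupts the clean block structure and breaks naive block decoding. I would resolve this in the usual way, by splitting each operand into its positive and negative parts and forming a constant number of \emph{non-negative} Kronecker products, then recombining; since each step incurs only a constant blow-up, a constant number of such multiplications stays in $\TC^0$. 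It is worth noting that although the coefficients of the power-series inverse can have magnitude as large as $2^{\poly(n)}$, they remain $\poly(n)$-bit integers, so the encoding width $m$ stays polynomial and all the integer operations remain within the regime of \cite{HAB}.

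The \textbf{main obstacle} is that computing the power-series inverse by Newton iteration uses $\Theta(\log N)$ rounds of polynomial multiplication, which naively places it only in $\NC^1$ rather than $\TC^0$. Collapsing this logarithmic-depth iteration into constant depth is precisely the technical heart of the Hesse--Allender--Barrington result: their $\TC^0$ algorithm for integer division performs exactly this flattening, using Chinese remaindering over polynomially many small primes together with the $\TC^0$-computability of iterated multiplication and of the associated prefix computations. I would therefore route the power-series inversion through their integer-division circuit rather than unrolling Newton iteration directly, which is what makes Corollary 6.5 of \cite{HAB} applicable. Finally, $\DLOGTIME$-uniformity is inherited without extra work, because the encoding width $m$, the block boundaries used for decoding, and the prime moduli are all given by simple closed-form arithmetic expressions in the input length.
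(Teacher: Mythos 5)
Your proposal cannot be compared against an internal proof because the paper has none: this lemma is imported verbatim as Corollary~6.5 of Hesse--Allender--Barrington \cite{HAB} and is used purely as a black box (to reduce arithmetic in $\mathbb{F}$ and $\mathfrak{R}_k$ to division by the fixed polynomial $p(x)$). So the right benchmark is the proof inside \cite{HAB}, and measured against that your sketch follows what is essentially the standard (and HAB's own) route: Kronecker substitution at $x=2^m$ with $m=\poly(n)$ to turn polynomial arithmetic into integer arithmetic, the reversal trick reducing division-with-remainder by a monic divisor to a truncated power-series inversion, and HAB's \DLOGTIME-uniform $\TC^0$ circuits for integer division and iterated multiplication to collapse the logarithmic number of Newton rounds. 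Your monicity caveat is also the correct reading here: for non-monic $p$ the remainder need not exist over $\mathbb{Z}[x]$, and the only divisor this paper ever uses is the monic irreducible $p(x)=x^{2\cdot 3^l}+x^{3^l}+1$.

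One step of your sketch is thinner than you indicate. The positive/negative splitting that fixes signs for \emph{products} does not carry over to the \emph{inversion} step: $1/(a^+-a^-)$ has no decomposition in terms of $1/a^+$ and $1/a^-$. When you try to read the coefficients of $\widehat{p}(x)^{-1} \bmod x^{D+1}$ out of an integer quotient $\lfloor 2^{mK}/\widehat{p}(2^m)\rfloor$, the block structure is corrupted both by borrows from negative coefficients (the inverse's coefficients alternate in sign already for $p=x+1$) and by the truncation error from the infinite tail of the series. The standard repair is to exploit exactly the bound you state --- the inverse's coefficients up to degree $D$ are $\poly(n)$-bit integers --- to choose $m$ polynomially large, add a fixed offset of $2^{m-1}$ to every block so that all blocks become nonnegative and readable, and absorb the tail and the floor's possible off-by-one into that offset. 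Alternatively you can bypass integer division entirely: writing $\widehat{p}=1-h$, the truncated inverse is $\sum_{i=0}^{D}h^i \bmod x^{D+1}$, and each $h^i$ is an iterated polynomial product, hence in $\TC^0$ by Kronecker substitution plus HAB's iterated integer multiplication; summing $\poly(n)$ such terms is iterated addition, also in $\TC^0$. With either patch your argument closes, so I would classify this as an under-specified step in an otherwise correct reconstruction rather than an error.
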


In particular, it follows that given an irreducible polynomial $p(x)$ (over $\mathbb{Z}_2[x]$), then for any $g(x) \in \mathbb{Z}[x]$ of degree atmost $\poly(n)$ we can find $g(x) \Modulo{2^k, p(x)}$ in $\parity \L$, for any fixed $k \geq 1$.

\begin{corollary}
	\label{lemma:matrixExpoF}
	Let $A_1, A_2, \ldots A_n \in \mathbb{F}^{n \times n}$ such that the degree of each entry is atmost $\poly(n)$ then the product $A_1A_2 \ldots A_n$ can be computed in $\parity \L$
\end{corollary}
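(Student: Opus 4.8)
The plan is to reduce the product over $\mathbb{F}$ to a single product over $\mathbb{Z}_2[x]$ (to which Lemma~\ref{lemma:matrixExpo01} applies) followed by one reduction modulo $p(x)$ at the very end. The key observation is that the canonical reduction map $\phi : \mathbb{Z}_2[x] \to \mathbb{F} = \mathbb{Z}_2[x]/(p(x))$ is a ring homomorphism, so applying $\phi$ entrywise commutes with matrix multiplication. First I would lift each $A_t$ to a matrix $\tilde{A}_t \in \mathbb{Z}_2[x]^{n \times n}$ whose entries are the canonical representatives, of degree $< \deg(p(x)) \le \poly(n)$. Then
\[
\phi(\tilde{A}_1 \tilde{A}_2 \cdots \tilde{A}_n) = A_1 A_2 \cdots A_n,
\]
where the product on the left is taken over $\mathbb{Z}_2[x]$ with no intermediate reductions and the one on the right over $\mathbb{F}$. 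This lets me defer all reductions modulo $p(x)$ until after the iterated product is formed, so that only the cheap final step involves $p(x)$.

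Next I would invoke Lemma~\ref{lemma:matrixExpo01} to compute the iterated product $\tilde{A}_1 \cdots \tilde{A}_n$ over $\mathbb{Z}_2[x]$ in $\parity\L$. Each entry of this product is a sum of products of $n$ polynomials, each of degree $< \deg(p(x))$, and is therefore a polynomial of degree at most $n \deg(p(x)) = \poly(n)$; in particular the output has only $\poly(n)$ bits, as the complexity bound requires. Finally I would apply $\phi$ entrywise, i.e.\ reduce each entry modulo $p(x)$ using Lemma~\ref{lemma:polydivision}, which places this last step in $\DLOGTIME$-uniform $\TC^0 \subseteq \L$.

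The step I expect to need the most care is arguing that the composition of the $\parity\L$ product computation with the reduction step stays inside $\parity\L$ rather than jumping up to $\L^{\parity\L}$. The point is that reduction modulo $p(x)$ is a fixed $\mathbb{Z}_2$-linear map on coefficient vectors: each coefficient of an entry of the final matrix is an XOR of coefficients of the corresponding entry of the $\mathbb{Z}_2[x]$-product, with the XOR pattern determined by the logspace-computable residues $x^i \bmod p(x)$. Since every coefficient of the $\mathbb{Z}_2[x]$-product is the parity of the number of accepting paths of an $\NL$ machine, and parities are closed under XOR (one can nondeterministically select the relevant index and simulate the corresponding machine, all in logspace), each output bit is again the parity of the accepting paths of a single $\NL$ machine. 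Hence the whole computation lies in $\parity\L$, which establishes the corollary.
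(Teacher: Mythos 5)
Your proof is correct and follows essentially the same route as the paper: compute the iterated product over $\mathbb{Z}_2[x]$ via Lemma~\ref{lemma:matrixExpo01}, then reduce each entry modulo $p(x)$ at the end using Lemma~\ref{lemma:polydivision}. Your additional argument that the final $\mathbb{Z}_2$-linear reduction composes with the $\parity\L$ computation without leaving $\parity\L$ is a worthwhile point of care that the paper leaves implicit, but it is a refinement of the same approach rather than a different one.
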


Our algorithm also requires computing inverse of non-zero elements. To compute inverse over $\mathbb{F}^*$ we adopt the techniques from Fich and Tompa \cite{715897, 10.1145/44483.44496}. Since $\mathbb{F} = \mathbb{Z}_2[x]/(p(x))$ with $N = \deg(p(x))$ which is atmost $\poly(n)$, then given $a \in \mathbb{F}^*$, we observe that $a^{-1} = a^{q-2}$ where $q = 2^N = |\mathbb{F}|$.

We interpret this equation over $\mathbb{Z}_2[x]$, that is we need to compute $a(x)^{q-2} \Modulo{p(x)}$ over $\mathbb{Z}_2[x]$. First we show how to compute $a(x)^2 \Modulo{p(x)}$. Construct the $N \times N$ matrix $Q$ whose $i^{th}$ row $(Q_{i,0}, Q_{i,1}, \ldots, Q_{i,N-1})$ is defined as:

$$\sum_{j=0}^{N-1} Q_{i,j} x^j = x^{2i} \Modulo{p(x)}$$

for each $0 \leq i \leq N-1$. Matrix $Q$ can be computed in $\parity \L$ using the divison lemma \ref{lemma:polydivision}. Then the elements of the row vector $(a_0, a_1, \ldots, a_{N-1})Q$ are the coefficients of $a(x)^2 \Modulo{p(x)}$ as explained in section 3 of \cite{10.1145/44483.44496}. Furthermore, the coefficients of $a(x)^{2^k} \Modulo{p(x)}$ are given by $(a_0, a_1, \ldots, a_{N-1})Q^k$. From lemma \ref{lemma:matrixExpo01} we get that $a(x)^{2^k} \Modulo{p(x)}$ can be computed in $\parity \L$, for any $k$ bounded by $\poly(n)$. Therefore. writing $q-2 = (c_0, c_1, \ldots, c_{N-1})$ in binary,
$$a(x)^{q-2} \Modulo{p(x)} = \prod_{i=0}^{N-1} a(x)^{c_i2^i} \Modulo{p(x)}$$
can be computed in $\parity \L$, which gives us $a^{-1}$.

Mahajan and Vinay \cite{DBLP:journals/cjtcs/MahajanV97} describe a way to reduce the computation of a determinant over a commutative ring to  a semi-unbounded logarithmic depth circuit with addition and multiplication gates  over the ring. In fact, the following is an easy consequence of their result:

\begin{lemma}(Mahajan-Vinay~\cite{DBLP:journals/cjtcs/MahajanV97})
	\label{lemma:determinant}
	Let $A \in R^{n \times n}$ be a matrix over a commutative ring.
	Then there exist $M \in R^{(2n^2)\times (2n^2)}$ and two vectors
	$a,b\in R^{2n^2}$ such that $det(A) = a^T M^n b$. Moreover, each entry of
	the matrix $M_{ij}$ and the vectors $a,b$ is one of the entries $A_{i',j'}$ 
	or a constant from $\{0,1\}$ and the mapping 
	$\phi$ where for every $(i,j) \in [2n^2]\times [2n^2]$,
	$\phi(i,j) \in A_{[n]\times [n]} \cup\{0,1\}$
	is computable in Logspace. 
\end{lemma}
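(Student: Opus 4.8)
The plan is to invoke the Mahajan--Vinay combinatorial characterization of the determinant in terms of \emph{clow sequences} and then encode the associated dynamic program as a bounded iterated matrix power, being careful to track signs correctly so that the construction works over an arbitrary commutative ring $R$ (not merely one of characteristic $2$). Recall that a clow (closed walk) is a walk $w_0, w_1, \ldots, w_\ell = w_0$ in which $w_0$ is the least-indexed vertex visited (the \emph{head}) and is visited only at the start and end; a clow sequence is an ordered collection of clows whose heads are strictly increasing. Mahajan and Vinay show that $\det(A) = \sum_{W} \sgn(W)\, \tilde w(W)$, where the sum is over clow sequences $W$ on $[n]$ of total length $n$, the weight $\tilde w(W)$ is the product of the corresponding entries $a_{i,j}$, and $\sgn(W) = (-1)^{n + c(W)}$ with $c(W)$ the number of clows. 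The key point is that all the extra (non-cycle-cover) terms cancel in pairs \emph{with their signs}, so the signed sum collapses to the determinant; this cancellation is what fails if one naively drops signs, and it is precisely where an arbitrary $R$ must be handled with full sign bookkeeping rather than assuming $-1 = 1$.

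First I would set up a layered graph on states that suffices to generate each length-$n$ clow sequence exactly once. Following the standard construction, a state records the current head $h$, the current vertex $v$, and a parity/phase bit marking whether we are extending the current clow or closing it and opening a new one with a strictly larger head. I would let the vertex set of this state graph have size $O(n^2)$, and then take $M$ to be the $(2n^2)\times(2n^2)$ transition matrix whose entry is either an entry $a_{i',j'}$ of $A$ (for an edge-extension step) or a constant from $\{0,1\}$ (for the bookkeeping/closing steps that start a new clow or accumulate the sign). I would fold the factor $(-1)^{n+c(W)}$ into the construction by absorbing a $-1$ into exactly the transitions that open a new clow, so that a clow sequence with $c$ clows accumulates the correct sign as a product of its transition weights; the overall global factor of $(-1)^n$ is handled by the choice of the boundary vectors $a$ and $b$. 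Since each valid clow sequence of length $n$ corresponds to exactly one length-$n$ path from the designated source state to the designated sink state, the $(s,t)$ entry of $M^n$ sums $\sgn(W)\tilde w(W)$ over all such sequences, and hence $a^T M^n b = \det(A)$.

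The remaining requirement is the \emph{uniformity and locality} of the encoding: every entry $M_{ij}$, and every entry of $a$ and $b$, must be either some $A_{i',j'}$ or a constant in $\{0,1\}$, and the map $\phi\colon [2n^2]\times[2n^2] \to A_{[n]\times[n]} \cup \{0,1\}$ must be Logspace-computable. I would verify this by exhibiting an explicit indexing of the $O(n^2)$ states by tuples $(h,v,\text{phase})$ and writing down a simple arithmetic rule that, given indices $(i,j)$, decodes the two endpoint states, decides whether the transition is an edge-extension (in which case $\phi(i,j) = a_{v,v'}$ for the decoded $v,v'$) or a bookkeeping step (in which case $\phi(i,j) \in \{0,1\}$, possibly carrying the sign as a constant $-1$, which we fold into the constant entry). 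All of this is index arithmetic of magnitude $\poly(n)$ and thus computable in Logspace.

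The main obstacle I anticipate is getting the sign placement exactly right while keeping every matrix entry confined to $\{0,1\}$ (together with the $A_{i',j'}$'s) as the statement demands. Since $-1$ need not equal $1$ in a general commutative ring, I cannot discard signs, yet I also cannot let a generic ring element other than $\pm 1$ or an entry of $A$ appear in $M$. The resolution I would pursue is to route \emph{all} sign contributions through the boundary vectors and through the single constant entry $-1$ attached to new-clow transitions, and then to double the state space (this is exactly why the dimension is $2n^2$ rather than $n^2$) so that the ``sign phase'' can be carried as a combinatorial bit rather than as a ring coefficient. Checking that this doubling genuinely realizes $(-1)^{n+c(W)}$ for every clow sequence, with no stray cancellations or double-counting, is the delicate verification; once it is in place, the identity $\det(A) = a^T M^n b$ follows over any commutative ring.
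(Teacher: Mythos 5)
Your proposal is correct and takes essentially the same route as the paper: both invoke the Mahajan--Vinay clow-sequence expansion, build the state space of pairs (head, current vertex) doubled by a sign-parity bit (hence dimension $2n^2$), collapse the identical layers of the layered clow graph into a single transition matrix $M$, and realize the sign $(-1)^{n+c(W)}$ combinatorially through that parity bit, with the initial parity fixed to $n \bmod 2$ in $a$ and the final sign applied at the boundary. Your closing worry about where the $-1$ can live is resolved in the paper exactly as you finally suggest: $M$ itself uses only entries of $A$ and constants $0,1$, while the negation is pushed into the terminal vector (the paper sets $b_{[1,h,u]} = a_{uh}$ and $b_{[0,h,u]} = -a_{uh}$), so the sign is carried as a combinatorial bit and cashed out only at the end.
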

\begin{proof}(Sketch)
	In \cite{DBLP:journals/cjtcs/MahajanV97}, given a matrix $A$ they construct a graph $H_A$ whose vertex set
	is $\{s,t_+,t_-\} \cup Q$ where $Q = \{[p,h,u,i] : p \in \{0,1\}, h,u \in [n],i \in \{0,\ldots,n-1\}\}$.
	Moreover, the edges are one of the following forms $(s,q),(q,q'), (q,t_+)$
	and $(q,t_-)$ where $q,q' \in Q$
	and have weights $w(q,q')$ that each depend on a single entry of $A$ or are
	one of the constants $0,1$. Moreover the mapping is very simple to describe.
	Let us focus on the induced subgraph $H_A[Q]$. Notice that $|Q| = 2n^3$
	and each ``layer'' of $H_A[Q]$ is identical. In other words,
	$e_i = \left<[p,h,u,i],[p',h',u',i+1]\right>$ is an edge in $H_A[Q]$ iff
	$e_j = \left<[p,h,u,j],[p',h',u',j+1]\right>$ is an edge in $H_A[Q]$ and 
	both have the same weights for every $i,j \in \{0,\ldots,n-1\}$. Thus define
	the matrix $M$ by putting $M_{[p,h,u],[p',h',u']}$ as the weight of any of the 
	edges $e_i$. 
	
	Finally to define $a,b$: let $a_{[n \bmod{2},1,1]} = 1$ and $a_q = 0$ for all 
	other $q$. $b_{[1,h,u]} = a_{uh}$ and $b_{[0,h,u]} = -a_{uh}$. The correctness
	of our Lemma then follows from the proof of Lemma~2 of \cite{DBLP:journals/cjtcs/MahajanV97}.
\end{proof}

Using above lemma \ref{lemma:determinant}, we reduce determinant over $\mathbb{F}$ to matrix powering over $\mathbb{F}$, which can be computed in $\parity \L$ using corollary \ref{lemma:matrixExpoF}. Hence we get

\begin{corollary}
	\label{lemma:determinantoverF}
	Let $A \in \mathbb{F}^{n \times n}$ then $\det(A)$ can be computed in $\parity \L$
\end{corollary}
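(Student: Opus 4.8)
The plan is to reduce the computation of $\det(A)$ over the field $\mathbb{F}$ to a single iterated matrix product over $\mathbb{F}$, and then to invoke the iterated-product bound already established in Corollary~\ref{lemma:matrixExpoF}. Concretely, I would apply the Mahajan--Vinay reduction of Lemma~\ref{lemma:determinant} with the commutative ring taken to be $\mathbb{F}$ itself. This furnishes a matrix $M \in \mathbb{F}^{(2n^2)\times(2n^2)}$ together with vectors $a, b \in \mathbb{F}^{2n^2}$ such that $\det(A) = a^T M^n b$, where every entry of $M$, $a$, and $b$ is either one of the entries of $A$ or a constant from $\{0,1\}$, and the map producing these entries is computable in Logspace. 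Since each entry of $A$ is an element of $\mathbb{F} = \mathbb{Z}_2[x]/(p(x))$, it is represented by a polynomial of degree less than $\deg(p(x)) \le \poly(n)$, and therefore so is every entry of $M$, $a$, and $b$.

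The next step is to compute the power $M^n$. This is the iterated product of $n$ copies of a single matrix over $\mathbb{F}$ of dimension $2n^2 \times 2n^2$, all of whose entries have degree at most $\poly(n)$. Corollary~\ref{lemma:matrixExpoF} places such iterated products in $\parity \L$. Although that corollary is stated for $n$ matrices of size $n \times n$, its proof applies verbatim to any $\poly(n)$ many matrices of size $\poly(n) \times \poly(n)$, because both the underlying integer-polynomial iterated product (Lemma~\ref{lemma:matrixExpo01}) and the per-entry reduction modulo $p(x)$ (Lemma~\ref{lemma:polydivision}) scale with $\poly(n)$. Finally, assembling the scalar $a^T(M^n)b$ requires only $O(n^2)$ additional additions and multiplications over $\mathbb{F}$, each of which lies in $\parity \L$ by the arithmetic bounds already noted in this section. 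Composing these $\parity \L$ stages yields $\det(A) \in \parity \L$.

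The only point needing care---rather than any genuine difficulty---is confirming that the dimension blow-up from $n$ to $2n^2$ introduced by the Mahajan--Vinay gadget does not push the iterated product outside $\parity \L$. Since $\parity \L$ is robust under polynomial inflation of both the matrix size and the number of factors, and since the constituent lemmas are all phrased with $\poly(n)$ parameters, this is immediate. I therefore expect the argument to be short, with essentially all of the work already discharged by Lemma~\ref{lemma:determinant} and Corollary~\ref{lemma:matrixExpoF}.
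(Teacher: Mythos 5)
Your proposal is correct and follows exactly the paper's own route: apply the Mahajan--Vinay reduction of Lemma~\ref{lemma:determinant} over the ring $\mathbb{F}$ to express $\det(A) = a^T M^n b$, then compute the matrix power via Corollary~\ref{lemma:matrixExpoF}. The extra care you take about the $2n^2 \times 2n^2$ dimension blow-up and the final assembly of $a^T M^n b$ is sound and merely makes explicit what the paper leaves implicit.
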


Now we demonstrate two results: computing rank and a null vector a matrix over $\mathbb{F}$ in $\parity \L$. We use Mulmuley's algorithm \cite{DBLP:journals/combinatorica/Mulmuley87}, which requires finding determinant over the ring $\mathbb{F}[y,t]$, which reduces to matrix powering over $\mathbb{F}[y,t]$ by the above result. We shall further reduce this to matrix powering over $\mathbb{F}$ as follows: Let $R$ be an arbitrary commutative ring. We associate with each polynomial $f(x) = \sum_{i=0}^{d-1} f_ix^i \in R[x]$ a $d \times d$ lower-triangular matrix 

$$P(f) = \begin{bmatrix}f_0 \\ f_1 & f_0 \\ f_2 & f_1 & f_0 \\ \vdots & \vdots & \vdots & \ddots \\ f_{d-1} & f_{d-2} & f_{d-3} & \ldots & f_0 \end{bmatrix} \in R^{d \times d}$$

Suppose we have two polynomials $f(x)$ and $g(x)$ of degree $d_1$ and $d_2$ respectively. We can interpret them as degree $d_1 + d_2$ polynomials (with higher exponent coefficients as $0$). Then we have that $P(f+g) = P(f) + P(g)$ and $P(fg) = P(f) P(g)$.

\begin{theorem}
	\label{theorem:matrixmultiply}
	Let $R$ be any commutative ring and $A_1, A_2, \ldots A_n$ be $n \times n$ matrices over $R[x]$ such that the degree of each entry is atmost $\poly(n)$. Denote by $\mathbf{A} = \prod A_l$. There exists $\poly(n) \times \poly(n)$ matrices $B_1, B_2, \ldots B_n$ over $R$ such that the coefficient of $x^k$ in $\mathbf{A}_{ij}$ is equal to $\mathbf{B}_{\psi(i,j,k)}$ where $\mathbf{B} = \prod B_l$ and $\psi$ is logspace computable.
	
	In other words, iterated matrix multiplication over $R[x]$ is logspace reducible to iterated matrix multiplication over $R$.
\end{theorem}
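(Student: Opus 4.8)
The plan is to lift the ring homomorphism $P$ from individual polynomials to whole matrices, so that the theorem reduces entirely to the two identities $P(f+g) = P(f) + P(g)$ and $P(fg) = P(f)P(g)$ already recorded above. First I would fix the block size $d$. Since each entry of every $A_l$ has degree at most some $D = \poly(n)$, and $\mathbf{A}_{ij} = \sum (A_1)_{i,i_1}(A_2)_{i_1,i_2}\cdots (A_n)_{i_{n-1},j}$ is a sum of products of $n$ such entries, every entry of $\mathbf{A}$ has degree at most $nD$; the same bound holds for every prefix product $A_1\cdots A_m$, whose entries have degree at most $mD \le nD$. Taking $d = nD + 1 = \poly(n)$ therefore dominates the degree of every polynomial that can arise in any partial product.

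Next I would define the reduction. Regard each entry $(A_l)_{ij}\in R[x]$ as a polynomial of formal degree $d-1$ (padding the higher coefficients with $0$), and replace it by its $d\times d$ lower-triangular Toeplitz matrix $P((A_l)_{ij})$; let $B_l$ be the resulting $nd\times nd$ block matrix over $R$ whose $(i,j)$ block is $P((A_l)_{ij})$. Since $nd = n^2 D + n = \poly(n)$, the matrices $B_l$ have the required dimensions, and the map sending the coefficient list of an entry to the entries of its Toeplitz block is a pure index shift, hence logspace computable.

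The heart of the argument is to show that this block-substitution commutes with matrix multiplication, i.e. that $\mathbf{B} = \prod_l B_l$ is exactly the block matrix whose $(i,j)$ block equals $P(\mathbf{A}_{ij})$. I would prove this for a product of two factors first: if $B$ is built from a matrix $M$ and $B'$ from $N$, then the $(i,j)$ block of $BB'$ is $\sum_k P(M_{ik})P(N_{kj})$, which by the two homomorphism identities equals $P\!\left(\sum_k M_{ik}N_{kj}\right) = P\bigl((MN)_{ij}\bigr)$, and then iterate over all $n$ factors. The one point requiring care — and the main obstacle — is truncation: $P(f)P(g)$ records only the product $fg \bmod x^d$, so the step $P(f)P(g) = P(fg)$ is valid only because $d$ was chosen strictly larger than the degree of every entry appearing in any partial product. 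The degree bound from the first paragraph is exactly what guarantees that no coefficient is ever lost, so each intermediate block remains the faithful Toeplitz image of the corresponding polynomial entry.

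Finally I would read off the coefficients. In $P(f)$ the coefficient $f_k$ occupies position $(k+1,1)$, so the coefficient of $x^k$ in $\mathbf{A}_{ij}$ is the entry of $\mathbf{B}$ in row $(i-1)d + k + 1$ and column $(j-1)d + 1$. Setting $\psi(i,j,k) = \bigl((i-1)d + k + 1,\ (j-1)d + 1\bigr)$ yields a logspace-computable index map, which completes the reduction from iterated matrix multiplication over $R[x]$ to iterated matrix multiplication over $R$.
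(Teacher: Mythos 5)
Your proposal is correct and takes essentially the same route as the paper: lift each polynomial entry to its lower-triangular Toeplitz block $P(\cdot)$, note that this block substitution commutes with matrix multiplication via $P(f+g)=P(f)+P(g)$ and $P(fg)=P(f)P(g)$, and read off coefficients with the same index map $\psi(i,j,k)=\bigl((i-1)d+k+1,(j-1)d+1\bigr)$. If anything, your block size $d=nD+1$ is slightly more careful than the paper's $N=nD$ (which is off by one when a product entry attains degree exactly $nD$), and your explicit treatment of truncation in partial products makes the correctness step more transparent; otherwise the two arguments coincide.
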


\begin{proof}
	Let $N = n \max_{i,j,k \in [n]} \{\deg ((A_i)_{jk})\}$ where $(A_i)_{jk}$ denotes the $(j,k)^{\text{th}}$ entry of $A_i$. By our assumption, $N$ is atmost $\poly(n)$. Now for each $1\leq i \leq n$, compute the matrix $B_i \in R^{nN \times nN}$ obtained from $A_i$ by replacing each polynomial $(A_i)_{jk}$ with the $N \times N$ matrix $P((A_i)_{jk})$. These matrices $B_i$ can be computed in log space. Now the coefficient of $x^k$ in $\mathbf{A}_{ij}$ can be read from the entry $\mathbf{B}_{\psi(i,j,k)}$ where $\psi(i,j,k) = ((i-1)N + k + 1, (j-1)N + 1)$ is logspace computable. The correctness follows from our observation $P(fg) = P(f)P(g)$.
\end{proof}

\begin{remark*}
	This gives us an alternate proof of the fact that iterated matrix multiplication over $\mathbb{Z}_2[x]$ is in $\parity \L$, as it follows immediately from the definition of $\parity \L$ that iterated matrix multiplication over $\mathbb{Z}_2$ is in $\parity \L$.
\end{remark*}

\begin{lemma} \cite{DBLP:journals/combinatorica/Mulmuley87}
	\label{lemma:rankoverF}
	Let $A \in \mathbb{F}^{m \times n}$ then $rank(A)$ can be computed in $\parity \L$
\end{lemma}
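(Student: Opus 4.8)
The plan is to transcribe Mulmuley's division-free rank algorithm \cite{DBLP:journals/combinatorica/Mulmuley87} into our setting and then observe that its only nontrivial computational step is a single determinant over a two-variable polynomial ring, which we already know how to handle. The naive idea of reading $\operatorname{rank}(A)$ off $\operatorname{rank}(A A^T)$ fails here: over a field of characteristic $2$ the symmetric form $AA^T$ can degenerate on the row space, so the rank can drop. Mulmuley's fix is to symmetrize and then scale by a fresh indeterminate. First I would form the $(m+n)\times(m+n)$ symmetric matrix
$$ B = \begin{pmatrix} 0 & A \\ A^T & 0 \end{pmatrix}, \qquad \text{so that } \operatorname{rank}(B) = 2\operatorname{rank}(A) \text{ over any field.} $$
Then, introducing indeterminates $t,\lambda$ and the diagonal matrix $Z = \operatorname{diag}(1,t,t^2,\ldots,t^{m+n-1})$, I would consider the characteristic polynomial
$$ \chi(\lambda,t) = \det(\lambda I - ZB) = \sum_{i} c_i(t)\,\lambda^i \in \mathbb{F}[t][\lambda]. $$
Mulmuley's key lemma is that multiplying the symmetric $B$ by $Z$ forces the algebraic multiplicity of the eigenvalue $0$ of $ZB$ over $\mathbb{F}(t)$ to coincide with its geometric multiplicity $\dim\ker B$; concretely, the least index $i$ with $c_i(t)\not\equiv 0$ equals $\dim\ker B = (m+n) - 2\operatorname{rank}(A)$. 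Hence $\operatorname{rank}(A)$ is recovered as $\tfrac12\big((m+n) - i_{\min}\big)$.

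It remains to place the computation of the vanishing pattern of the $c_i(t)$ in $\parity\L$. Note that $\chi(\lambda,t)$ is a single determinant of $\lambda I - ZB$ over the commutative ring $\mathbb{F}[t,\lambda]$, whose entries have degree $1$ in $\lambda$ and degree at most $m+n-1 \le \poly(n)$ in $t$. By the Mahajan--Vinay reduction (Lemma \ref{lemma:determinant}), this determinant equals $a^T M^{s} b$ with $s = m+n$ and $M \in \mathbb{F}[t,\lambda]^{2s^2 \times 2s^2}$, i.e. it reduces to iterated matrix multiplication over $\mathbb{F}[t,\lambda]$. Viewing $\mathbb{F}[t,\lambda] = (\mathbb{F}[t])[\lambda]$, one application of Theorem \ref{theorem:matrixmultiply} with $R = \mathbb{F}[t]$ strips off $\lambda$, and a second application with $R = \mathbb{F}$ strips off $t$, leaving an iterated product of polynomially many matrices of polynomially bounded size over $\mathbb{F}$. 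By Corollary \ref{lemma:matrixExpoF} this lies in $\parity\L$, and from its output each coefficient $c_i(t)$ is read off directly; testing $c_i(t)\not\equiv 0$ and locating the least such $i$ is a routine $\parity\L$ post-processing step, which yields $\operatorname{rank}(A)$.

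The main obstacle is not the complexity bookkeeping but justifying the rank formula in \emph{characteristic $2$}. One must verify Mulmuley's claim that, for the symmetric matrix $B$, scaling by $Z$ makes $\lambda=0$ a root of $\chi(\lambda,t)$ of multiplicity \emph{exactly} $\dim\ker B$ over $\mathbb{F}(t)$, rather than some larger multiplicity arising from a nilpotent (defective) part at $0$; the unscaled $B$ genuinely fails this, as e.g. $\big(\begin{smallmatrix} 1 & 1 \\ 1 & 1\end{smallmatrix}\big)$ over $\mathbb{F}_2$ has characteristic polynomial $\lambda^2$ yet kernel dimension $1$, and symmetry alone does not rescue the argument. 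I would quote this multiplicity statement directly from \cite{DBLP:journals/combinatorica/Mulmuley87}, and also check the secondary point that the two successive applications of Theorem \ref{theorem:matrixmultiply} keep all intermediate degrees polynomially bounded: since $\chi$ has $\lambda$-degree $m+n$ and $t$-degree $O((m+n)^2)$, both remain within $\poly(n)$, so the reduction to iterated matrix multiplication over $\mathbb{F}$ is legitimate.
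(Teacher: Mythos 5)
Your proof is correct and follows essentially the same route as the paper's: the same symmetrization $\bigl(\begin{smallmatrix} 0 & A \\ A^T & 0 \end{smallmatrix}\bigr)$, the same diagonal scaling by powers of a fresh indeterminate, Mulmuley's least-nonvanishing-coefficient rank formula, and the identical complexity chain of Lemma \ref{lemma:determinant}, two applications of Theorem \ref{theorem:matrixmultiply}, and Corollary \ref{lemma:matrixExpoF}. Your extra care about the characteristic-$2$ multiplicity issue (with the $\bigl(\begin{smallmatrix}1&1\\1&1\end{smallmatrix}\bigr)$ counterexample for the unscaled matrix) is a sound addition, since Mulmuley's theorem is stated for arbitrary fields and may be quoted as you propose.
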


\begin{proof}
	We can assume that $A$ is a square ($n \times n$) symmetric matrix because otherwise replace $A$ with $\begin{pmatrix}0 & A \\ A^T & 0\end{pmatrix}$ which has rank twice that of $A$. Let $Y$ be a $n \times n$ diagonal matrix with the $(i,i)^{th}$ entry as $y^{i-1}$. And let $m$ be the smallest number such that $t^m$ has a non-zero coefficient in the characteristic polynomial of $YA$, that is $\det(tI - YA)$. Then rank of $A = n-m$.
	
	Suffices to show that $\det(tI - YA)$ can be computed in $\parity \L$. Notice that $(tI - YA) \in \mathbb{F}[y,t]^{n \times n}$ and so $\det(tI-YA)$ is logspace reducible to matrix powering over $\mathbb{F}[y,t]$. Using the canonical isomorphism $\mathbb{F}[y,t] \cong \mathbb{F}[y][t]$, repeated application of theorem \ref{theorem:matrixmultiply} logspace reduces it to matrix powering over $\mathbb{F}$.
\end{proof}

\begin{observation}
	Let $A \in \mathbb{F}^{n \times n}$ be an invertible matrix then $A^{-1}$ can be computed in $\parity \L$
\end{observation}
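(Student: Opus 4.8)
The plan is to use the Cayley--Hamilton theorem to express $A^{-1}$ as a low-degree polynomial in $A$ whose coefficients are read off from the characteristic polynomial of $A$, and then to observe that every ingredient of this expression is one of the $\parity \L$-computable primitives already assembled in this section.

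First I would compute the characteristic polynomial $\chi_A(t) = \det(tI - A) = \sum_{i=0}^n c_i t^i$, where $c_n = 1$ and $c_0 = \chi_A(0) = (-1)^n \det(A)$. Since $tI - A \in \mathbb{F}[t]^{n \times n}$, Lemma \ref{lemma:determinant} makes this determinant logspace-reducible to matrix powering over $\mathbb{F}[t]$, which by Theorem \ref{theorem:matrixmultiply} (taking $R = \mathbb{F}$) reduces to matrix powering over $\mathbb{F}$, and hence lies in $\parity \L$ by Corollary \ref{lemma:matrixExpoF}. Reading off the coefficient of each power of $t$ recovers all of the $c_i$ simultaneously.

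Next, Cayley--Hamilton gives $\sum_{i=0}^n c_i A^i = 0$, and because $A$ is invertible $c_0 = (-1)^n\det(A)$ is a nonzero element of $\mathbb{F}$. Rearranging $c_0 I = -A\sum_{i=1}^n c_i A^{i-1}$ yields
$$A^{-1} = -c_0^{-1}\sum_{i=1}^{n} c_i A^{i-1}.$$
The powers $A^0, A^1, \dots, A^{n-1}$ are all prefixes of the iterated product $A \cdot A \cdots A$ and so are computable in $\parity \L$ by Corollary \ref{lemma:matrixExpoF}; the scalar inverse $c_0^{-1}$ is computable in $\parity \L$ by the inversion procedure for $\mathbb{F}^*$ described above; and the concluding linear combination is a logspace post-processing of field additions and scalar multiplications. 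As an alternative route one could instead use Cramer's rule, computing each of the $n^2$ cofactors $\det(A[\widehat{\{j\}},\widehat{\{i\}}])$ in parallel via Corollary \ref{lemma:determinantoverF} and dividing the resulting adjugate by $\det(A)$.

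The one point that needs care---rather than any isolated hard step---is verifying that this composition of subroutines stays inside $\parity \L$: each piece (the characteristic polynomial, the matrix powers, the scalar inverse, and the final linear combination) is individually a $\parity \L$-computable function, so I would appeal to the closure of $\parity \L$ under logspace-Turing reductions to $\parity \L$-computable functions to conclude that their composition remains in $\parity \L$. This closure, together with the fact that Theorem \ref{theorem:matrixmultiply} lets us bundle the entire coefficient vector of $\chi_A$ into a single matrix power over $\mathbb{F}$, is precisely what keeps the whole computation parallel.
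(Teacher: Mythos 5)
Your proposal is correct, but your primary route is genuinely different from the paper's. The paper proves this observation by Cramer's rule, which is exactly the ``alternative route'' you mention in passing: compute $\det(A)$ and the $n^2$ cofactors, each in $\parity \L$ by Corollary \ref{lemma:determinantoverF}, then invert the single field element $\det(A)$ via the $\mathbb{F}^*$-inversion procedure of Fich--Tompa described earlier in the section. Your main argument instead goes through the characteristic polynomial and Cayley--Hamilton: you compute $\chi_A(t)=\det(tI-A)$ by reducing the determinant over $\mathbb{F}[t]$ to matrix powering over $\mathbb{F}$ (Lemma \ref{lemma:determinant} followed by Theorem \ref{theorem:matrixmultiply} and Corollary \ref{lemma:matrixExpoF}), and then write $A^{-1}=-c_0^{-1}\sum_{i=1}^{n}c_iA^{i-1}$, which is legitimate since $c_0=(-1)^n\det(A)\neq 0$ for invertible $A$. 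This pipeline is sound---it is in fact the same $\mathbb{F}[t]$-to-$\mathbb{F}$ reduction the paper itself uses for Mulmuley's rank algorithm in Lemma \ref{lemma:rankoverF}---and your appeal to closure of $\parity \L$ under logspace-Turing reductions to $\parity \L$ subroutines is the same implicit composition the paper relies on throughout. Comparing the two: Cramer's rule is leaner, needing only the determinant subroutine and one scalar inversion, with no polynomial-coefficient bookkeeping; your Cayley--Hamilton route invokes more machinery but extracts the entire characteristic polynomial in one matrix-powering computation (strictly more information, reusable for rank), and replaces $n^2+1$ determinant computations by a single one plus $n$ matrix powers---a distinction without much consequence here, since all of these parallelize within $\parity \L$ either way.
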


This follows from the fact that computing $A^{-1}$ involves computing the determinant of $A$ and $n^2$ cofactors, that is determinants of $n^2$ matrices of size $(n-1) \times (n-1)$. Notice that this also requires inverting the determinant, an element of $\mathbb{F}^*$, which has been explained above.

\begin{corollary}
	\label{lemma:nullvector}
	Let $A \in \mathbb{F}^{n \times n}$ then finding a non-trivial null vector (if it exists) is in $\parity \L$
\end{corollary}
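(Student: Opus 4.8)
The plan is to reduce the search for a null vector to a constant number of rounds of rank computations, one matrix inversion, and one matrix--vector product, each of which has already been placed in $\parity \L$ (Lemma \ref{lemma:rankoverF}, the preceding observation on inverses, and Corollary \ref{lemma:matrixExpoF}). First I would settle existence: a non-trivial null vector exists iff $r := \mathrm{rank}(A) < n$, a single rank query; so assume $r < n$ henceforth.

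Next I would pin down a maximal linearly independent set of columns $J = \{j_1 < \cdots < j_r\}$, reusing verbatim the index-selection trick already used above to build the permutation $Q$. Running the $n$ queries $\mathrm{rank}(A[\,\cdot\,,\{1,\dots,j\}])$ in parallel, we place $j \in J$ exactly when this rank exceeds $\mathrm{rank}(A[\,\cdot\,,\{1,\dots,j-1\}])$. Since $|J| = r < n$ there is a column index $j_0 \notin J$; take the least one. As $J$ spans the whole column space, $A_{\cdot j_0}$ lies in the span of $\{A_{\cdot j} : j \in J\}$, so there are unique coefficients $c_1, \dots, c_r \in \mathbb{F}$ with $A_{\cdot j_0} = \sum_i c_i A_{\cdot j_i}$.

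To recover the $c_i$ in parallel I would apply the same selection scheme to the rows of the $n \times r$ submatrix $B = A[\,\cdot\,, J]$, obtaining a row set $I$ with $|I| = r$ for which the $r \times r$ matrix $A[I,J]$ is invertible; then $c = A[I,J]^{-1} A[I, j_0]$, computed with the $\parity \L$ inverse and a matrix--vector product. Because $A_{\cdot j_0}$ genuinely lies in the column span and $B$ has full column rank, the unique solution of the restricted system $A[I,J]\, c = A[I, j_0]$ is also the unique global solution, so $Bc = A_{\cdot j_0}$ on all rows. Finally I set $v_{j_0} = 1$, $v_{j_i} = c_i$ for $i \in [r]$, and $v_j = 0$ otherwise; working in characteristic $2$, $Av = \sum_j v_j A_{\cdot j} = A_{\cdot j_0} + \sum_i c_i A_{\cdot j_i} = A_{\cdot j_0} + A_{\cdot j_0} = 0$, and $v$ is non-trivial since $v_{j_0} = 1$.

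The procedure is a constant number of dependent rounds --- rank queries to fix $J$ and $j_0$, then rank queries to fix $I$, then one inversion and product to fix $c$ --- where each round is $\parity \L$ and depends only logspace-uniformly on the previous outputs. The hard part is thus not any individual step but the composition: one must appeal to the closure of $\parity \L$ under $\parity \L$-reductions (the very property already relied on when chaining the reductions in the permanent algorithm) to guarantee that stringing together these constantly many $\parity \L$ stages stays inside $\parity \L$.
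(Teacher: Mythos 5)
Your proposal is correct and follows essentially the same route as the paper: both locate an invertible $r \times r$ submatrix via prefix-rank queries in the style of the permanent algorithm (using Lemma \ref{lemma:rankoverF}), invert it with the $\parity \L$ inversion observation, and argue that the remaining equations are redundant so the local solution extends to a global null vector. The only cosmetic difference is that you set a single dependent column's coordinate to $1$ and zero out the other free coordinates, whereas the paper permutes $A$ into the block form $\begin{pmatrix} B & C \\ D & E \end{pmatrix}$ and assigns the all-ones vector to the free coordinates.
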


\begin{proof}
	Let $rank(A) = m$, then permute the rows and columns of $A$ so that we can express $A = \begin{pmatrix} B & C \\ D & E \end{pmatrix}$ such that $B$ is an invertible $m \times m$ matrix. Let $\begin{pmatrix}x \\ y\end{pmatrix}$ be a column vector where $x \in \mathbb{F}^m$ and $y \in \mathbb{F}^{n-m}$, such that
	$$A \begin{pmatrix}x \\ y \end{pmatrix} = 0 \implies \begin{pmatrix} B & C \\ D & E \end{pmatrix} \begin{pmatrix}x \\ y \end{pmatrix} = 0$$
	
	This reduces to the set of equations: $Bx+Cy=0$ and $Dx+Ey = 0$. But the later is a redundant set of equations because $\begin{pmatrix}D & E\end{pmatrix}$ can be written in terms of $\begin{pmatrix}B & C\end{pmatrix}$. More precisely, there exists a matrix $V \in \mathbb{F}^{n-m \times m}$ such that $D = VB$ and $E=VC$ and so $Dx+Ey = VBx + VCy = V(Bx+Cy) = 0$. Therefore setting $x = \boldsymbol{1}$ and $y = -B^{-1}C\boldsymbol{1}$, gives us the desired null vector. So it suffices to give a $\parity \L$ algorithm to transform $A$ to the form as specified above, which follows from \cite{218278}. Let $A_i$ be the matrix formed from first $i$ rows of $A$. We construct a set $S \subseteq [n]$ as follows:
	
	\begin{itemize}
		\item Base case: $i \in S$ if $rank(A_i) = 1$ and $rank(A_j) = 0$ for all $j < i$
		\item Include $k \in S$ iff $rank(A_k) = 1 + rank(A_{k-1})$
	\end{itemize}
	
	It follows that $|S| = m$ and let $S = \{i_1 < i_2 < \cdots < i_m\}$ and $P_r$ be the permutation matrix described by $(m,i_m)\ldots(2, i_2)(1, i_1)$. Then $P_rA$ is the required matrix having first $m$ rows as linearly independent. Next, consider the matrix $A' = (P_rA)^T$ and apply the above algorithm to get a permutation matrix $P_c$ such that first $m$ rows of $P_cA'$ are linearly independent. Then $P_rAP_c^T$ is the required matrix such that the leading principal $m-$minor is non-singular.
\end{proof}

Finally, to conclude our result, we discuss arithmetic over $\mathfrak{R}_k$

\begin{lemma}
	\label{lemma:arithmetic}
	Let $k \geq 1$ be fixed then the following operations can be done in $\parity \L$
	\begin{itemize}
		\item Multiplication : Given  $a,b \in \mathfrak{R}_k$ compute $ab$
		\item Iterated Addition: Given $c_1,c_2,\ldots,c_n \in \mathfrak{R}_k$ compute $\sum_i c_i$
	\end{itemize}
\end{lemma}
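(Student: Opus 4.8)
The plan is to exploit the fact that, since $k$ is fixed, an element of $\mathfrak{R}_k = \mathbb{Z}[x]/(2^k,p(x))$ is nothing but a polynomial of degree less than $N = \deg(p(x)) \leq \poly(n)$ whose $N$ coefficients each live in $\mathbb{Z}_{2^k} = \{0,\ldots,2^k-1\}$ and hence occupy only $O(1)$ bits. Thus the whole representation has $\poly(n)$ bits, and both operations reduce to coordinate-wise integer arithmetic followed (in the multiplicative case) by a single polynomial reduction modulo $(2^k,p(x))$. The two ingredients I would invoke are that integer iterated addition of $n$ numbers of $O(\log n)$ bits lies in $\DLOGTIME$-uniform $\TC^0 \subseteq \L \subseteq \parity \L$, and that reduction of a polynomial of $\poly(n)$ degree with $O(\log n)$-bit coefficients modulo $(2^k,p(x))$ lies in $\parity \L$ by the corollary to Lemma \ref{lemma:polydivision}.

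For iterated addition the task is immediate and does not even require reducing modulo $p(x)$: the coefficient vectors add componentwise, and the sum of $n$ polynomials each of degree less than $N$ is again of degree less than $N$, so the result is automatically in reduced form. Concretely, for each coefficient position $j \in \{0,\ldots,N-1\}$ I would compute $\sum_{i=1}^n (c_i)_j \Modulo{2^k}$, which is an iterated addition of $n$ integers of $O(1)$ bits; carrying this out in parallel over all $N$ positions keeps us inside $\TC^0 \subseteq \parity \L$.

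For multiplication of $a,b \in \mathfrak{R}_k$ I would first form the ordinary polynomial product $g(x) = a(x)b(x) \in \mathbb{Z}[x]$, whose degree is less than $2N$ and whose $l$-th coefficient $\sum_{i+j=l} a_i b_j$ is a sum of at most $N$ products of two $k$-bit constants; each such coefficient is therefore bounded by $N \cdot 2^{2k} = \poly(n)$, i.e.\ has $O(\log n)$ bits, and is itself an iterated addition computable in $\TC^0 \subseteq \L$. Since $g(x)$ now has $\poly(n)$ degree and $O(\log n)$-bit coefficients, I would then apply the $\parity \L$ reduction guaranteed by the remark following Lemma \ref{lemma:polydivision} to obtain $g(x) \Modulo{2^k,p(x)} = ab$. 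The only step that genuinely needs the full strength of $\parity \L$ --- and hence the main point to get right --- is this final reduction modulo $p(x)$; the preceding coefficient computations are comfortably in $\L$, so the verification essentially amounts to checking that the product polynomial meets the degree and coefficient-size bounds required by the division lemma, which the fixedness of $k$ ensures.
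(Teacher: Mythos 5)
Your proposal is correct and follows essentially the same route as the paper's own proof: coordinate-wise iterated addition of the coefficients for the sum, and the convolution formula for the product followed by reduction modulo $(2^k, p(x))$ via Lemma \ref{lemma:polydivision}. The only difference is cosmetic---you track the coefficient-size bounds and the finer $\TC^0 \subseteq \L$ placement of the intermediate steps explicitly, while the paper simply cites the $\mathbb{Z}_{2^k}$ arithmetic facts from \cite{HAB}.
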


\begin{proof}
	We use the fact that the arithmetic operations mentioned in the statement of lemma, but over $\mathbb{Z}_{2^k}$ are in $\parity \L$ (see for e.g. \cite{HAB})
	
	\begin{itemize}
		\item Let $a(x), b(x) \in \mathfrak{R}_k$ and write $a(x) = \sum_{i=0}^{D} a_i x^i$ and $b(x) = \sum_{i=0}^{D'} b_i x^i$, then $a(x)b(x) = \sum_{i=0}^{D+D'} \left(\sum_{j=0}^i a_jb_{i-j}\right) x^i$. Finally, using lemma \ref{lemma:polydivision}, divide $a(x)b(x)$ by $p(x)$ to obtain $ab \in \mathfrak{R}_k$
		\item Similarly, let $c_1(x), c_2(x), \ldots, c_n(x) \in \mathfrak{R}_k$ and write $c_i(x) = \sum_{j=0}^{D_i} c_{ij} x^j$ for each $i \in [n]$, then $\sum_{i=1}^n c_i(x) = \sum_{j=0}^{\max\{D_i\}} \left(\sum_{i=1}^n c_{ij}\right) x^j$ where we assume $c_{ij} = 0$ if $j > D_i$
	\end{itemize}
\end{proof}

\subsection{Examples}

Let 
$A = \begin{pmatrix}
1 & x+1 & x+2 \\
x & x^2 & x^2+x \\
x^2 & 3 & x^2+3
\end{pmatrix}$, 
$p(x) = x^6 + x^3 + 1$ be the irreducible polynomial and we want to evaluate $\perm(A) \Modulo{4}$ over the ring $\mathfrak{R} = \mathbb{Z}[x]/(p(x))$. First of all, a direct computation gives us $\perm(A) = 2x^5+6x^4+2x^3+12x^2+12x$. Now we demonstrate the steps taken by our algorithm.

\textbf{Step 1}: We start by evaluating $\perm(A) \Modulo{2}$. We directly notice here that last column is the sum of first two columns and so $\det(A) = 0 \implies \perm(A) \equiv 0 \Modulo{2}$
\newline

\textbf{Step 2}: Since $\det(A) \equiv 0 \Modulo{2}$, we solve the equation $A^Tv=0$ over $\mathbb{F}$ by our method as follows:
$\begin{pmatrix}
1 & x & x^2 \\
x+1 & x^2 & 1 \\
x & x^2+x & x^2+1
\end{pmatrix}
\begin{pmatrix}
v_1 \\
v_2 \\
v_3
\end{pmatrix} = 0$ 

Since rank of the principal $2 \times 2$ submatrix is already $2$, we set $v_3 = 1$ and solve the equation: 
$\begin{pmatrix} v_1 \\ v_2 \end{pmatrix} = - \begin{pmatrix}
1 & x \\
x+1 & x^2
\end{pmatrix}^{-1}
\begin{pmatrix}
x^2 \\ 1
\end{pmatrix} 1$
to get $v_1 = x^3+1$ and $v_2 = x^5+x$.
\newline

\textbf{Step 3}: For each $j=1,2,3$, we find $b_j$ such that $\sum_i v_i a_{ij} = 2b_j \Modulo{4}$

\begin{align*}
j=1: &&(x^3+1)+x(x^5+x)+x^2 &= 2x^2 \\
j=2: &&(x+1)(x^3+1)+x^2(x^5+x)+3 &= 2x^3 \\
j=3: &&(x+2)(x^3+1)+(x^2+x)(x^5+x)+x^2+3 &= 2x^3 + 2x^2
\end{align*}

\textbf{Step 4}: We have the formula
\begin{equation*}
\begin{split}
\perm(A) \Modulo{4} &= 2 \left(\sum_{j=1}^3 b_j \perm(A[\widehat{\{3\}},\widehat{\{j\}}]) \Modulo{2} \right) \\
& - 2 \sum_{i=1}^2 v_i \left(\sum_{\substack{j,k = 1 \\ j<k}}^3 a_{ij}a_{ik} \perm(A[\widehat{\{3,i\}},\widehat{\{j,k\}}]) \Modulo{2}\right)
\end{split}
\end{equation*}

\textbf{Step 4.1}:
\begin{align*}
\perm(A[\widehat{\{3\}},\widehat{\{1\}}]) &= \perm\begin{pmatrix} x+1 & x+2 \\ x^2 & x^2+x \end{pmatrix} = x \Modulo{2} \\
\perm(A[\widehat{\{3\}},\widehat{\{2\}}]) &= \perm\begin{pmatrix} 1 & x+2 \\ x & x^2+x \end{pmatrix} = x \Modulo{2} \\
\perm(A[\widehat{\{3\}},\widehat{\{3\}}]) &= \perm\begin{pmatrix} 1 & x+1 \\ x & x^2 \end{pmatrix} = x \Modulo{2} \\
\implies \sum_{j=1}^3 b_j \perm(A[\widehat{\{3\}},\widehat{\{j\}}]) &= ((x^3) + (x^4) + (x^4+x^3)) = 0 \Modulo{2}
\end{align*}

\textbf{Step 4.2}:
\begin{equation*}
\begin{split}
\sum_{\substack{j,k = 1 \\ j<k}}^3 a_{1j} & a_{1k} \perm(A[\widehat{\{1,3\}},\widehat{\{j,k\}}]) \\
&= (x+1)(x^2+x) + (x+2)x^2 + (x+1)(x+2)x = x^3+x^2+x \Modulo{2} \\
\sum_{\substack{j,k = 1 \\ j<k}}^3 a_{2j} & a_{2k} \perm(A[\widehat{\{2,3\}},\widehat{\{j,k\}}]) \\
&= x^3(x+2) + x(x^2+x)(x+1) + x^2(x^2+x) = x^4 + x^3 + x^2 \Modulo{2}
\end{split}
\end{equation*}
\begin{align*}
\sum_{i=1}^2 v_i \left(\sum_{\substack{j,k = 1 \\ j<k}}^3 a_{ij}a_{ik} \perm(A[\widehat{\{3,i\}},\widehat{\{j,k\}}]) \Modulo{2}\right) = x^5+x^4+x^3 \Modulo{4}
\end{align*}

Therefore, $\perm(A) \Modulo{4} = 2x^5+2x^4+2x^3$ which matches with our direct computation.

\paragraph*{Example 2}

Consider $A = \begin{pmatrix} 1 & x & x^2 \\ x & x^2 & 1 \\ 1 & x^2 & x \end{pmatrix}$, and so $\perm(A) = x^5+x^4+x^2+x$. Therefore, we now have $\det(A) \not \equiv 0 \Modulo{2}$

\textbf{Step 1}: Find $Q$ such that $QA$ has all leading principal minors are non-zero. In this case, we will get $Q = \begin{pmatrix} 1 & 0 & 0 \\ 0 & 0 & 1 \\ 0 & 1 & 0 \\  \end{pmatrix} \implies QA = \begin{pmatrix} 1 & x & x^2 \\ 1 & x^2 & x \\ x & x^2 & 1 \end{pmatrix}$
\newline

\textbf{Step 2}: Consider matrix $C$ whose all entries are same as $A$ except the last one which is incremented by $y$, that is $C = \begin{pmatrix} 1 & x & x^2 \\ 1 & x^2 & x \\ x & x^2 & 1+y \end{pmatrix}$, then $\perm(C) = \perm(A) + y\perm(A[\widehat{\{3\}}, \widehat{\{3\}}])$. Again construct $C'$ same as $A[\widehat{\{3\}}, \widehat{\{3\}}]$ but replace the last entry incremented by $y'$, that is $C' = \begin{pmatrix} 1 & x \\ x & x^2 + y' \end{pmatrix} \implies \perm(C') = \perm(A[\widehat{\{3\}}, \widehat{\{3\}}]) + y'\perm(A[\widehat{\{2,3\}}, \widehat{\{2,3\}}])$. Written as one equation, we get
\begin{align*}
\perm(A) = \perm(C) - y\left(\perm(C') - y'a_{11}\right)
\end{align*}
In this equation, both $C, C'$ are matrices with $\det \equiv 0 \Modulo{2}$ with the correct choice of $y, y'$, which were: 
\begin{align*}
y_0 = \perm(A)\perm(A[\widehat{\{3\}}, \widehat{\{3\}}])^{-1} \Modulo{2} = (x^5+x^4+x^2+x)(x^4+x^3+x^2) = x^3+1 \\ 
y'_0 = \perm(A[\widehat{\{3\}}, \widehat{\{3\}}])\perm(A[\widehat{\{2,3\}}, \widehat{\{2,3\}}])^{-1} \Modulo{2} = x^2+x
\end{align*}

So we can compute $\perm(C)$ and $\perm(C')$ by above method and substitute it into previous equation to get $\perm(A) \Modulo{4}$

\section{Permanent via Interpolation}
\label{sec:interpolation}

We now demonstrate another technique to compute permanent modulo $2^k$, which doesn't resort to computations over exponentially sized fields. This proceeds by choosing small degree polynomial $p(x)$. The techniques developed in this section are new and hence interesting by themself.

First we mention a result from \cite{DBLP:journals/corr/abs-2011-03819} used to interpolate the coefficients of a polynomial.

\begin{lemma} (\cite{DBLP:journals/corr/abs-2011-03819} Lemma 3.1)
	Let $\mathbb{F}$ be a finite, characteristic $2$, field of order $q$.
	$$\sum_{a \in \mathbb{F}^*} a^m = \begin{cases}
	0 &\quad \text{if } q-1 \nmid m \\
	1 &\quad \text{otherwise}
	\end{cases}$$
\end{lemma}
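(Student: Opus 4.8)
The plan is to use nothing more than the cyclic structure of $\mathbb{F}^*$ together with a multiplicative change-of-variables. Write $S = \sum_{a \in \mathbb{F}^*} a^m$ for the quantity of interest. Since $\mathbb{F}$ is a finite field, its multiplicative group $\mathbb{F}^*$ is cyclic of order $q-1$; fix a generator $g$. The key observation is that $a \mapsto ga$ is a bijection of $\mathbb{F}^*$ onto itself, so reindexing the sum by this bijection gives
$$S = \sum_{a \in \mathbb{F}^*} (ga)^m = g^m \sum_{a \in \mathbb{F}^*} a^m = g^m S,$$
whence $(1 - g^m)\,S = 0$ in $\mathbb{F}$.

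First I would dispose of the case $q-1 \nmid m$. Because $g$ has order exactly $q-1$, the element $g^m$ equals $1$ if and only if $(q-1) \mid m$; hence in this case $g^m \neq 1$, so $1 - g^m$ is a nonzero element of the field. As $\mathbb{F}$ has no zero divisors, the relation $(1 - g^m)\,S = 0$ forces $S = 0$, as claimed.

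It remains to treat the divisible case $(q-1) \mid m$. Here every element of $\mathbb{F}^*$ satisfies $a^{q-1} = 1$, so $a^m = 1$ for all such $a$, and therefore $S = \sum_{a \in \mathbb{F}^*} 1 = (q-1)\cdot 1$. This is where characteristic $2$ enters: since $q = 2^e$ for some $e \geq 1$, the integer $q-1$ is odd, and summing the field element $1$ an odd number of times yields $1$. Hence $S = 1$, completing the analysis.

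There is no deep obstacle here; the only point demanding care is the bookkeeping in the divisible case, where the count $q-1$ must be reduced modulo the characteristic. The hypothesis that $\mathbb{F}$ has characteristic $2$ — so that $q-1$ is odd — is precisely what makes the answer $1$ rather than $0$, and it is the feature of the statement most easily gotten wrong. Equivalently, one could enumerate $\mathbb{F}^* = \{g^i : 0 \le i \le q-2\}$ and evaluate the geometric series $\sum_{i=0}^{q-2} (g^m)^i$ directly, using $g^{m(q-1)} = 1$ to annihilate the numerator when $g^m \neq 1$; this yields the same case split.
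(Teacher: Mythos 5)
Your proof is correct and complete: the change-of-variables $a \mapsto ga$ by a generator $g$ of the cyclic group $\mathbb{F}^*$ yields $(1-g^m)S=0$, which settles the case $q-1 \nmid m$, and in the divisible case you correctly reduce $S=(q-1)\cdot 1$ to $1$ using the fact that $q-1$ is odd in characteristic $2$. Note that the paper itself gives no proof of this lemma --- it is quoted by citation --- but your argument is the standard one for this classical fact and handles both cases, including the characteristic-$2$ bookkeeping that the statement hinges on, without any gaps.
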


This dichotomy allows us to extract coefficients of any integer polynomial.

\begin{lemma} (\cite{DBLP:journals/corr/abs-2011-03819} Corollary 3.2)
	Let $f(x) = \sum_{i=0}^d c_ix^i$ be a polynomial with integer coefficients and $q > d+1$, then for any $0 \leq t \leq d$,
	$$\sum_{a \in \mathbb{F}^*} a^{q-1-t}f(a) = c_t \Modulo{2}$$
\end{lemma}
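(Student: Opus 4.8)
The plan is to reduce the corollary to the preceding dichotomy (Lemma 3.1) by a single rearrangement of a finite double sum. Since $f$ has integer coefficients while $\mathbb{F}$ has characteristic $2$, I read each $c_i$ as its reduction modulo $2$ and carry out the whole computation inside $\mathbb{F}$; the claimed identity is then an equality of elements of the prime subfield $\mathbb{Z}_2 \subseteq \mathbb{F}$, which is exactly what the notation $\Modulo{2}$ records.

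First I would substitute the expansion of $f$ and distribute, using commutativity of addition in $\mathbb{F}$ to pull the sum over $a$ inside:
\[
\sum_{a \in \mathbb{F}^*} a^{q-1-t} f(a) = \sum_{a \in \mathbb{F}^*} a^{q-1-t} \sum_{i=0}^d c_i a^i = \sum_{i=0}^d c_i \sum_{a \in \mathbb{F}^*} a^{\,q-1-t+i}.
\]
Next I would apply Lemma 3.1 to each inner sum with exponent $m_i = q-1-t+i$. Here I first note that $t \le d < q-1$ forces $m_i \ge q-1-t \ge 1 > 0$, so every exponent is a strictly positive integer and the dichotomy is invoked legitimately. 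Lemma 3.1 then makes the $i$-th inner sum equal to $1$ precisely when $q-1 \mid m_i$, that is, when $q-1 \mid (i-t)$, and equal to $0$ otherwise.

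The crux is the divisibility count, and this is exactly where the hypothesis $q > d+1$ enters. Since $0 \le i,t \le d$ we have $|i-t| \le d$, while $q > d+1$ gives $q-1 > d \ge |i-t|$; hence the only multiple of $q-1$ occurring among the values $i-t$ is $0$, so $q-1 \mid (i-t)$ holds if and only if $i = t$. Consequently every term with $i \ne t$ vanishes and the surviving $i=t$ term contributes $c_t \cdot 1 = c_t$, yielding $\sum_{a \in \mathbb{F}^*} a^{q-1-t} f(a) = c_t \Modulo{2}$.

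I anticipate no genuine obstacle: the proof is a direct consequence of Lemma 3.1, and the only points demanding care are bookkeeping ones — verifying that each $m_i$ is a strictly positive integer so the dichotomy applies, and confirming that $q > d+1$ is exactly the separation needed to isolate the single index $i=t$. If instead one only had $q-1 \le d$, some nonzero difference $i-t$ could itself be a multiple of $q-1$, causing distinct coefficients to collide in the residue class and destroying the extraction; so the bound is tight in spirit and should be flagged rather than glossed over.
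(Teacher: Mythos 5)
Your proof is correct and is essentially the intended derivation: the paper presents this statement as a corollary of the preceding dichotomy (Lemma 3.1), and your argument — expand $f$, swap the finite sums, apply the dichotomy to each exponent $q-1-t+i$, and use $q > d+1$ to force $q-1 \mid (i-t) \iff i=t$ — is exactly that reduction. The only remark is that your care about $m_i \geq 1$ is harmless but unnecessary, since the dichotomy as stated also covers $m=0$ (where $q-1 \mid 0$ gives $\sum_{a} a^0 = q-1 \equiv 1 \Modulo{2}$, consistently).
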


But this gives us the coefficients modulo $2$ only. How do we get coefficients modulo $2^k$?

The crucial observation here is that the above sum was computed over $\mathbb{F}$. So instead we do so over $\mathfrak{R}$ by identifying a copy of $\mathbb{F}^* \xhookrightarrow{} \mathfrak{R}$, and then we have
$$\sum_{a \in \mathbb{F}^*} a^m = \begin{cases}
2\alpha_m &\quad \text{if } q-1 \nmid m \\
2\beta_m + 1 &\quad \text{otherwise}
\end{cases}$$
where $\alpha_m, \beta_m \in \mathfrak{R}$.

Now we use repeated squaring method to obtain our desired modulo $2^k$ result
\begin{lemma}
	$\forall m \geq 0, k \geq 1$
	$$\sum_{a_1,\ldots,a_{2^{k-1}} \in \mathbb{F}^*} (a_1a_2 \cdots a_{2^{k-1}})^m = \left(\sum_{a \in \mathbb{F}^*} a^m\right)^{2^{k-1}} = 
	\begin{cases}
	0 \Modulo{2^k} &\quad \text{if } q-1 \nmid m \\
	1 \Modulo{2^k} &\quad \text{otherwise}
	\end{cases}$$
\end{lemma}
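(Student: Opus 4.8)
The plan is to prove the two equalities separately. The first equality
$\sum_{a_1,\ldots,a_{2^{k-1}} \in \mathbb{F}^*} (a_1 \cdots a_{2^{k-1}})^m = \left(\sum_{a \in \mathbb{F}^*} a^m\right)^{2^{k-1}}$
is purely formal. All quantities are computed in the commutative ring $\mathfrak{R}$ via the embedding $\mathbb{F}^* \hookrightarrow \mathfrak{R}$, and commutativity gives $(a_1 \cdots a_{2^{k-1}})^m = a_1^m \cdots a_{2^{k-1}}^m$. Summing independently over each of the $2^{k-1}$ coordinates then factors the multi-indexed sum as a product of $2^{k-1}$ copies of the single sum $\sum_{a \in \mathbb{F}^*} a^m$, which is exactly the claimed power. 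I expect this step to be entirely routine.

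For the second equality I would feed the dichotomy established just above into this power. That is, I substitute $\sum_{a \in \mathbb{F}^*} a^m = 2\alpha_m$ when $q-1 \nmid m$ and $\sum_{a \in \mathbb{F}^*} a^m = 2\beta_m + 1$ when $q-1 \mid m$, and raise each to the exponent $2^{k-1}$ inside $\mathfrak{R}$. In the first case we get $(2\alpha_m)^{2^{k-1}} = 2^{2^{k-1}}\alpha_m^{2^{k-1}}$, and since $2^{k-1} \geq k$ for every $k \geq 1$ (a one-line induction, using $2^k = 2\cdot 2^{k-1} \geq k+1$), the factor $2^{2^{k-1}}$ already vanishes modulo $2^k$, yielding $0 \Modulo{2^k}$ as required.

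The remaining case $q-1 \mid m$ is where the work lies, and I would handle it through the auxiliary claim that $(1 + 2u)^{2^{k-1}} \equiv 1 \Modulo{2^k}$ for every $u \in \mathfrak{R}$; applying it with $u = \beta_m$ finishes the lemma. I would prove this claim by induction on $k$. The base case $k=1$ is immediate since $(1+2u)^1 = 1 + 2u \equiv 1 \Modulo 2$. For the inductive step, the hypothesis lets me write $(1+2u)^{2^{k-1}} = 1 + 2^k w$ for some $w \in \mathfrak{R}$, and squaring gives $(1+2u)^{2^k} = 1 + 2^{k+1}w + 2^{2k}w^2$, whose last term vanishes modulo $2^{k+1}$ because $2k \geq k+1$. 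This repeated-squaring argument is the crux: it is precisely the point where the specific exponent $2^{k-1}$ (a power of two, rather than a linear factor) is what kills the cross terms, and I expect isolating the correct inductive statement to be the main obstacle, after which the calculation itself is short.
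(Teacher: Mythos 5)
Your proposal is correct and takes essentially the same route as the paper's own proof: the case $q-1 \nmid m$ is dispatched by observing $2^{2^{k-1}} \equiv 0 \Modulo{2^k}$, and the case $q-1 \mid m$ by induction on $k$ via repeated squaring of $1+2\beta_m$, which is exactly the paper's argument. Your write-up is in fact slightly more careful, since you make explicit both the factoring of the multi-indexed sum and the bound $2^{k-1}\geq k$ that the paper leaves as ``clearly.''
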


Note: We remind the reader that the computation here is done over $\mathfrak{R}_k$

\begin{proof}
	Fix any $m \geq 0$. Clearly $(2\alpha_m)^{2^{k-1}} \equiv 0 \Modulo{2^k}$.
	
	Suffices to show $(2\beta_m+1)^{2^{k-1}} \equiv 1 \Modulo{2^k}$.  This follows from induction on $k$. For $k=1$ the result holds as stated above. Assume that for some $k \geq 1$, the result holds. Then we have $(2\beta_m+1)^{2^{k-1}} = 2^k\gamma_{m,k} + 1$ where $\gamma_{m,k} \in \mathfrak{R}$
	
	$$(2\beta_m+1)^{2^k} =  \left((2\beta_m+1)^{2^{k-1}}\right)^2 = (2^k\gamma_{m,k}+1)^2 = 1 \Modulo{2^{k+1}}$$
\end{proof}
Using this we can interpolate coefficients of an integer polynomial as follows:

$$\sum_{a_1, \ldots, a_{2^{k-1}} \in \mathbb{F}^*} (a_1 \ldots a_{2^{k-1}})^{q-1-t}f(a_1 \ldots a_{2^{k-1}}) = c_t \Modulo{2^k}$$
Finally let $A(x)$ be a $n \times n$ matrix of integer polynomials and the permanent polynomial be 

$$\perm(A(x)) = \sum_{i=0}^N c_i x^i$$
From the above lemma it follows that

$$\sum_{a_1, a_2 \ldots \in \mathbb{F}^*} (a_1a_2 \cdots)^{q-1-t}\perm(A(a_1a_2 \cdots)) = c_t \Modulo{2^k}$$
provided that our field $\mathbb{F}$ is of order atleast $N+2$. For this, fix $l = \ceil{\frac{\log \log N}{\log 3}}$ so that $2^{2.3^l} > N+1$. Hence the field obtained from the irreducible polynomial $p(x) = x^{2.3^l} + x^{3^l} + 1$ (\cite{van2013introduction} Theorem 1.1.28) serves the purpose. It suffices to compute $|\mathbb{F}^*|^{2^{k-1}} = O(N^{2^{k-1}})$ many permanents over $\mathfrak{R}_k$ to obtain all the coefficients $c_t$ modulo $2^k$, all of which can be computed in parallel. Hence, we can compute the permanent of $A$ modulo $2^k$ in $\parity \L$.

\section{Shortest Disjoint Cycles}
\label{sec:sdc}

Now that we have a $\parity \L$ algorithm to compute permanent mod $2^k$, we are all set to demonstrate a parallel algorithm for shortest 2-disjoint paths. But we notice that we can place this problem in a more general framework of shortest disjoint cycles. Let us first formally define these problems.

$SDP(k)$: Given a weighted undirected graph with $k$ pairs of marked vertices $\{(s_i, t_i) \mid 1 \leq i \leq k\}$, find the minimum of sum of weight of paths between each pair $s_i$ and $t_i$ such that all paths are pairwise disjoint.

$SDC(l,k)$: Given a weighted undirected graph with $k$ marked vertices, find the minimum of sum of weight of $l$ cycles such that they pass through all the marked vertices and are pairwise disjoint and each cycle is incident to atleast one of the marked vertices.

\begin{note*}
	We only consider \textit{non-trivial cycles} that is we don't consider self-loops in the above problem.
\end{note*}

Given an instance of the $SDP(2)$ problem, join the pairs of vertices $(s_1,t_1)$ and $(s_2,t_2)$ with new vertices $u_1$ and $u_2$ respectively, as show in Figure \ref{fig:join}. Notice that any two disjoint cycles passing through $u_1$ and $u_2$ give us two disjoint paths between $(s_1,t_1)$ and $(s_2,t_2)$.

\begin{figure}[ht]
	\begin{center}
		\begin{tikzpicture}
		\node (s1) at (0,0) {$s_1$};
		\node (t1) at (0,2) {$t_1$};
		\node (u1) at (-1,1) {$u_1$};
		\node (s2) at (2,0) {$s_2$};
		\node (t2) at (2,2) {$t_2$};
		\node (u2) at (3,1) {$u_2$};
		
		\draw (s1) -- (u1) node [midway, sloped, below] {1};
		\draw (u1) -- (t1) node [midway, sloped, above] {1};
		\draw (s2) -- (u2) node [midway, sloped, below] {1};
		\draw (u2) -- (t2) node [midway, sloped, above] {1};
		\end{tikzpicture}
	\end{center}
	\captionsetup{justification=centering}
	\caption{Converting an instance of $SDP(2)$ to $SDC(2,2)$}
	\label{fig:join}
\end{figure}
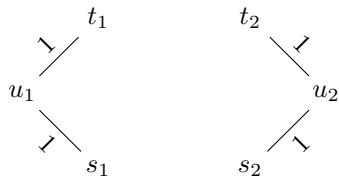

Similarly, connecting the $k$-pairs of vertices via another new vertex and edges of weight $x^0 = 1$, gives us a reduction from $k$-disjoint paths to $k$-disjoint cycles via $k$-vertices. Since this reduction preserves the weight of the path/cycle, it is indeed a reduction from $SDP(k)$ to $SDC(k,k)$.

To apply the techniques of \cite{DBLP:journals/siamcomp/BjorklundH19} to disjoint cycles problem, we instead consider the following variant $SDCE(l,k)$: Given a weighted undirected graph with $k$ marked \textit{edges}, find the minimum of sum of weight of $l$ cycles such that they pass through all the marked edges and are pairwise disjoint.

It can be easily seen that there is a logspace reduction from $SDC(l,k)$ to $SDCE(l,k)$ as follows: Let $(G, \{v_1, v_2, \ldots, v_k\})$ be an instance of $SDC(l,k)$. Assume without loss of generality that the marked vertices form an independent set, or otherwise split the edge into two by introducing a new vertex in the middle. For each $i$, choose a vertex $u_i$, a neighbour of $v_i$, such that $i \neq j \implies u_i \neq u_j$, we solve $(G, \{e_1, e_2, \ldots e_k\})$ an instance of $SDCE(l,k)$ where $e_i = \{u_i, v_i\}$ and output the smallest solution amongst all the instances of $SDCE$ thus created. Since for each $i$, $\deg(v_i) < n$, number of instances of $SDCE$ created are bounded by $O(n^k)$ all of which can be solved in parallel as $k$ is fixed.

\subsection{Pre-processing}

Given a graph $G = (V,E,w)$ and $k$ marked edges $\{e_i = \{s_i,t_i\}\}_{i \leq k}$, assign weight $x^{w(e)}$ to the edge $e$ of $G$ and add self loops (weight $1$) on all vertices except $\{s_i, t_i\}_{i \leq k}$. Observe that all the non-zero terms appearing in the permanent of adjacency matrix correspond to a cycle cover in $G$. To force these $k$-edges in our cycle cover, we direct these edges in a certain way which we shall call as a \textit{pattern}.

Formally, define a pattern $P$ as an ordered pairing of terminals of given edges $\{s_i, t_i \mid 1 \leq i \leq k\}$. Furthermore, we view each undirected edge $\{u,v\}$ in $G$ as two directed edges $(u,v)$ and $(v,u)$ with the same weight. For any pattern $P$, define a pattern graph $G_P$ with the same vertex/edge set as of $G$ but such that if $(u,v) \in P$ then all outgoing edges from $u$, except edge $(u,v)$, are deleted. We denote by $A_P$ the adjacency matrix of $G_P$.

Now we shall show how to solve the $SDCE(1,k)$ problem for any $k \geq 1$. This algorithm essentially follows from the work of \cite{8efe47374b8c4cadb1165092ce46518d}. Next, we also present how to solve the $SDCE(2,k)$ problem for any $k \geq 2$. As far as we know, no algorithm (better than brute force) was known apriori to our work for $k \geq 3$.

\subsection{Shortest Cycle}

Let $\{e_i = \{s_i,t_i\}\}_{i \leq k}$ be given $k$-edges. For each binary sequence $b = (b_1,b_2,\ldots,b_{k-1})$ of length $k-1$, consider the following pattern $P_b$:
\begin{itemize}
	\item $(s_1,t_1) \in P_b$
	\item $\forall 2 \leq i \leq k$, if $b_{i-1}=0$ then $(s_i,t_i) \in P_b$ else $(t_i,s_i) \in P_b$
\end{itemize}

So $\{P_b\}_b$ is the collection of patterns with the orientation of $e_1$ fixed and all possible orientations of the other edges $\{e_i\}_{i\geq 2}$, as dictated by each binary sequence.

\begin{claim}
	Under the assumption that the shortest cycle is unique, the smallest exponent with non-zero coefficient in $f_1(x) \Modulo{2}$ is the weight of unique shortest cycle passing through the given edges, where
	$$f_1(x) = \sum_b \perm(A_{P_b})$$
\end{claim}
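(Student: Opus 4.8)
The plan is to run everything through the cycle-cover interpretation of the permanent from the Observation in Section~\ref{sec:prelim}. With the edge weighting $x^{w(e)}$ and the weight-$1$ self-loops on all non-terminal vertices, each cycle cover $C$ of a pattern graph $G_{P_b}$ has $\tilde{w}(C) = x^{W(C)}$, where $W(C)$ is the total weight of the genuine (non-loop) edges used by $C$, so $\perm(A_{P_b})$ is a generating function in $x$ recording exactly these totals. The defining property of the pattern is that fixing $(u,v) \in P_b$ deletes every other out-edge at $u$, which forces the directed edge $(u,v)$ into \emph{every} cycle cover of $G_{P_b}$; hence a cycle cover of $G_{P_b}$ is the same thing as a directed cycle cover of $G$ that contains all $k$ marked edges in precisely the orientations prescribed by $P_b$. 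Summing over $b$, I would view $f_1(x)$ as counting each directed cycle cover $C$ of $G$ with multiplicity equal to the number of patterns $P_b$ it is consistent with.

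First I would prove a counting lemma that pins this multiplicity down. Fix a directed cycle cover $C$ in which every marked edge appears. A pattern $P_b$ is consistent with $C$ iff, for each marked edge, the unique out-edge that $P_b$ commits to at its tail agrees with $C$. Since $e_1$ is anchored as $(s_1,t_1)$ in \emph{all} patterns, $C$ is consistent with no pattern unless $e_1$ is oriented $(s_1,t_1)$ in $C$. For each $i \ge 2$ there are two cases: if $e_i$ is traversed as a single directed edge then exactly one choice of $b_{i-1}$ matches, while if $e_i$ occurs as a $2$-cycle (both orientations present) then \emph{both} choices of $b_{i-1}$ match. Thus the multiplicity is $2^{d}$, where $d$ counts the indices $i \ge 2$ at which $e_i$ appears as a $2$-cycle, and is $0$ when some marked edge is missing or $e_1$ is reversed. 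Reducing modulo $2$, the only covers that survive are those in which $e_1$ is oriented $(s_1,t_1)$ and every $e_i$ with $i\ge 2$ appears as a single directed edge, each contributing once. The anchoring of $e_1$ is exactly what prevents a genuine cycle through the marked edges from cancelling against its own reversal.

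Next I would cancel the remaining ``non-solution'' covers by an involution. On the surviving set, reverse the orientation of a canonically chosen non-trivial cycle of length at least $3$ that avoids $e_1$ (say, the one through the least-indexed vertex off the $e_1$-cycle); reversing a long cycle keeps every marked edge on it a single directed edge, so the image still survives, the map is weight-preserving and involutive, and it is fixed-point-free precisely on covers that possess such a cycle. Hence all such covers cancel modulo $2$, and the fixed points are the covers whose only long cycle passes through $e_1$. Because each $e_i$ ($i\ge 2$) is a single directed edge it must lie on a cycle of length at least $3$, and the only one available is the $e_1$-cycle; therefore every surviving cover consists of a single simple cycle $Z$ through all $k$ marked edges, together with self-loops and possibly some $2$-cycles on non-marked edges, and it is counted exactly once with weight $x^{W(Z)+2(\cdots)}$. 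Taking $Z$ to be the shortest cycle and discarding the extra $2$-cycles yields the unique minimal exponent $x^{W(Z^{*})}$ with coefficient $1$; uniqueness of the shortest cycle rules out any other cover of the same exponent, and every smaller exponent cancels, so the least exponent with odd coefficient in $f_1(x)$ is the weight of the shortest cycle, as claimed.

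The step I expect to be the main obstacle is the systematic cancellation of the degenerate covers, and in particular the treatment of $2$-cycles. The even-multiplicity argument disposes of $2$-cycles on $e_2,\dots,e_k$, and the reversal involution disposes of all multi-cycle and ``free-cycle'' covers; the delicate residue is a $2$-cycle on $e_1$ itself, which is its own reversal and is counted an odd number of times. For $k\ge 2$ this is still killed, since such a cover necessarily contains a separate long cycle carrying $e_2,\dots,e_k$ that the involution can reverse, but for $k=1$ it must be argued separately (for instance by subdividing $e_1$, or by arguing that the genuine shortest simple cycle dominates it in weight) that this degenerate term does not produce a spurious smaller exponent. Making this $2$-cycle bookkeeping airtight, so that the minimal surviving exponent is guaranteed to come from a genuine simple cycle through all the marked edges, is the crux of the proof.
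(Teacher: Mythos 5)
Your proof is correct and rests on the same core idea as the paper's: covers containing a non-trivial cycle that avoids $e_1$ are cancelled by reversing that cycle, while anchoring $e_1$ prevents the desired cycle from cancelling against its own reversal. What you do differently is the execution, and your version is genuinely sharper. The paper argues in one step (``the cycle avoiding $e_1$ can be oriented two ways, so the cover contributes for at least two $b$-sequences'') and concludes that \emph{only} single-cycle-plus-self-loops covers survive; that reasoning silently breaks for $2$-cycles, which are their own reversals. Your two-step decomposition repairs exactly this: the multiplicity lemma (each cover is counted $2^d$ times, $d$ the number of marked edges $e_i$, $i\geq 2$, traversed as $2$-cycles) kills marked-edge $2$-cycles across pattern choices, and the explicit involution on the surviving set kills the long cycles, leaving as fixed points a single long cycle through $e_1$ and all marked edges together with self-loops \emph{and possibly $2$-cycles on unmarked edges}. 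Your observation that these last covers do survive mod $2$ is correct and contradicts the paper's literal statement; they are harmless only because, under the (shared, implicit) assumption of positive edge weights, they carry strictly larger exponents than the shortest cycle. Your final flag is also a real issue rather than excess caution: for $k=1$ the cover consisting of the $2$-cycle on $e_1$ plus self-loops survives with exponent $2w(e_1)$, is its own reversal, and nothing cancels it, so the claim as stated needs either the convention that such a back-and-forth traversal counts as a cycle through $e_1$, or a patch such as subdividing $e_1$ and marking both halves (turning it into an instance with two marked edges sharing a vertex); for $k\geq 2$ your argument that the cycle carrying $e_2,\dots,e_k$ can still be reversed disposes of it. In short: same skeleton as the paper, but your proof is the one that actually closes the $2$-cycle bookkeeping that the paper's proof elides.
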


\begin{proof}
	Let $C$ be any cycle cover which consists of atleast 2 non-trivial cycles. Consider the cycle in $C$ which doesn't contain edge $e_1$ - there are two ways of orienting this cycle, namely clockwise and counter-clockwise. So this cycle cover contributes to $f_1(x)$ for atleast two such $b$-sequences and so it vanishes modulo $f_1 \Modulo{2}$.
	
	Thus the only terms that survive in $f_1 \Modulo{2}$ are the cycle covers which consist of one cycle passing through all the given edges and self-loops on the remaining vertices, and furthermore number of cycles of this weight must be odd.
	
	Since the shortest weight cycle was unique by our assumption, we get the desired result.
\end{proof}

To drop the assumption that a unique minimum weight solution exists, we instead assign modified weights $2nmw(e) + w'(e)$ where $n = |V(G)|$, $m = |E(G)|$, $w(e)$ is the given weight of edge $e$ and $w'(e) \in \{0,2, \ldots, 2m-1\}$ is choosen independently and uniformly at random for each edge $e$. Then isolation lemma \cite{mvv} tells us that, with probability $\frac12$, the minimum weight cycle is unique. Hence if the term $x^j$ survives as the smallest exponent with non-zero coefficient term in $f_1(x) \Modulo{2}$ then we get the weight of shortest cycle as $\floor{j/2nm}$.

\subsection{Shortest 2-Disjoint Cycles}

We shall first prove the following stronger result:

\begin{theorem}
	\label{th:separatingedges}
	Given a set of $k$-edges $\{e_i\}_{i \leq k}$, we can find weight of the shortest 2-disjoint cycles passing through these edges such that $e_1$ and $e_2$ appear in different cycles in $\parity \L / \poly$ (and $\RNC$)
\end{theorem}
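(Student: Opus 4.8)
The plan is to reuse the pattern–orientation machinery from the shortest-cycle subsection and add a second forced edge, then separate the genuinely $2$-disjoint configurations from single cycles by reading two permanent polynomials modulo $4$ against each other. First I would set up two families of pattern sums. Keep $f_1(x)=\sum_b \perm(A_{P_b})$ exactly as in the shortest-cycle subsection (orientation of $e_1$ fixed, all of $e_2,\ldots,e_k$ ranging over both orientations, $2^{k-1}$ patterns). In addition define $f_2(x)=\sum_{b'}\perm(A_{P'_{b'}})$ where now \emph{both} $e_1$ and $e_2$ are given fixed orientations and only $e_3,\ldots,e_k$ range, giving $2^{k-2}$ patterns. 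The total is $2^{k-1}+2^{k-2}$ permanents, each of an adjacency matrix over $\mathbb{Z}[x]$ of polynomially bounded degree, so by Theorem~\ref{thm:main1} every one of them can be computed modulo $2^2=4$ in $\parity\L$.

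The heart is a counting lemma for the coefficient with which an undirected cycle cover $\hat C$ (necessarily containing all $k$ marked edges, with self-loops elsewhere) contributes. Summing a free marked edge over both orientations, and traversing an unmarked cycle in either direction, each yields a factor $2$ except for the one cycle pinned down by an already-fixed edge; carrying out this bookkeeping shows that a cover with $r$ non-trivial cycles contributes $2^{r-1}\tilde{w}(\hat C)$ to $f_1$, contributes $2^{r-2}\tilde{w}(\hat C)$ to $f_2$ when $e_1$ and $e_2$ lie on \emph{different} cycles, and contributes either $2^{r-1}\tilde{w}(\hat C)$ or $0$ to $f_2$ when $e_1,e_2$ lie on the \emph{same} cycle (according to whether the orientation forced on $e_2$ by $e_1$ agrees with the fixed one). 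Reducing modulo $4$ and invoking the isolation lemma \cite{mvv} — so that, after the perturbation $2nm\,w(e)+w'(e)$, each attainable weight is realised by at most one cover — yields a clean dictionary: at weight $w$ the coefficient of $x^w$ in $f_1\Modulo{4}$ is $1$ iff the unique cover there is a single cycle, is $2$ iff it is a $2$-cycle cover (of either kind), and is $0$ otherwise; while the coefficient in $f_2\Modulo{4}$ is $1$ iff that cover is either a compatible single cycle or a $2$-cycle cover with $e_1,e_2$ separated.

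The extraction step then combines the two polynomials: output the least exponent $w$ for which simultaneously $[x^w]f_1\equiv 2\Modulo{4}$ and $[x^w]f_2\equiv 1\Modulo{4}$. The first condition isolates the $2$-cycle covers among all covers (single cycles give $f_1$-coefficient $1$, and covers with $\geq 3$ cycles give $0$), and, restricted to those, the second condition keeps exactly the covers in which $e_1,e_2$ are separated (a same-cycle $2$-cover gives $f_2$-coefficient $2$ or $0$). Dividing this exponent by $2nm$ recovers the weight of the shortest $2$-disjoint cycles with $e_1,e_2$ separated, exactly as in the shortest-cycle argument. Since each permanent modulo $4$ is in $\parity\L$ and, for fixed $k$, only a constant number of them is combined, the whole computation is in $\parity\L$; supplying the isolating weights as advice gives $\parity\L/\poly$, while drawing them at random gives $\RNC$.

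The step I expect to be the main obstacle is the counting lemma together with the isolation guarantee: getting the exponents $2^{r-1}$ versus $2^{r-2}$ exactly right requires carefully tracking, per cycle, how many of the summed orientations are consistent and how many unmarked cycles contribute free factors of $2$, and the joint $f_1/f_2$ dictionary is valid only once isolation rules out any collision of two distinct covers at the minimal weight (otherwise, for instance, two single cycles of equal weight could masquerade as a $2$-cycle cover in $f_1$). Everything else — the reductions $SDP\to SDC\to SDCE$ and the reduction of full $SDCE(2,k)$ to the separated version by trying each candidate partner edge $e_j$ for $e_1$ and taking the minimum — is routine.
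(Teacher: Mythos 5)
Your per-cover counting is correct as far as it goes (a cover with $r$ non-trivial cycles contributes $2^{r-1}$ to $f_1$, and $2^{r-2}$, respectively $2^{r-1}$ or $0$, to your version of $f_2$), but the step you yourself flag as the main obstacle is a genuine gap, and it cannot be closed by the isolation lemma. Your dictionary needs that \emph{every} attainable weight up to the answer is realised by at most one cycle cover, simultaneously across all cover types (single cycles, same-cycle $2$-covers, separated $2$-covers, $3$-cycle covers). The isolation lemma gives something far weaker: for one set system it makes the \emph{minimum}-weight set unique with constant probability. The statement you actually invoke --- ``each attainable weight is realised by at most one cover'' --- is false for any polynomially bounded weighting, by pigeonhole: there are in general exponentially many cycle covers but only polynomially many attainable exponents. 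Once two covers can share a weight, your joint test $([x^w]f_1,[x^w]f_2)\equiv(2,1) \Modulo{4}$ admits false positives below the true answer $w^*$: for instance, two single-cycle covers of equal weight $w<w^*$, exactly one of which is compatible with the orientation you pinned on $e_2$, give $[x^w]f_1\equiv 1+1=2$ and $[x^w]f_2\equiv 1+0=1$, so your extraction outputs $w$. A collision at $w^*$ itself can likewise destroy the signal (coefficient $3$ in $f_1$ instead of $2$). This is exactly the failure mode you mention parenthetically, and no application of \cite{mvv} rules it out.

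This is where the paper's proof takes a different, and essentially necessary, turn: instead of pinning the orientation of $e_2$, it introduces the crossing patterns $Q_c$ (pairing $s_1$ with $s_2$ and $t_1$ with $t_2$, with $e_3,\dots,e_k$ ranging over orientations) and sets $f_2=\sum_b\perm(A_{P_b})-\sum_c\perm(A_{Q_c})$. The bijection of Claim \ref{lemma:cpcqbijection} makes every cover in which $e_1$ and $e_2$ share a cycle cancel \emph{exactly}, cover by cover --- hence at every weight, no matter what collisions occur --- while covers with at least three non-trivial cycles vanish modulo $4$ because their multiplicity $2^{r-1}$ is divisible by $4$. Consequently only separated $2$-covers contribute to $f_2 \Modulo{4}$ at all, each with coefficient $2$, and uniqueness is needed only for the minimum within that single family --- precisely what the isolation lemma does provide. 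To salvage your two-polynomial scheme you would need a per-weight (not per-minimum) uniqueness guarantee, which is unobtainable; the crossing-pattern subtraction is the ingredient that replaces it.
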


Let $\{P_b\}_b$ be the patterns as defined above. Furthermore, for each binary sequence $c=(c_1,c_2,\ldots,c_{k-2})$ of length $k-2$, define pattern $Q_c$ as
\begin{itemize}
	\item $(s_1,s_2) \in Q_c$
	\item $(t_1,t_2) \in Q_c$
	\item $\forall 3 \leq i \leq k$, if $c_{i-2}=0$ then $(s_i,t_i) \in Q_c$ else $(t_i,s_i) \in Q_c$
\end{itemize}

With a combination of these patterns, we can get our desired cycle covers. We claim that the non-zero terms appearing in $$f_2(x) = \sum_b \perm(A_{P_b}) - \sum_c \perm(A_{Q_c})$$
correspond to cycle covers in $G_{P_b}$ such that edges $e_1$ and $e_2$ appear in different cycles.

To prove our claim, we need to argue that the cycle covers of $G_{P_b}$ in which $e_1$ and $e_2$ appear in the same cycle are exactly the cycle covers of $G_{Q_c}$ Let

$$\mathcal{C}_Q = \bigsqcup_c \text{ cycle covers of }G_{Q_c}$$
$$\mathcal{C}_P = \bigsqcup_b \text{ cycle covers of }G_{P_b} \text{ such that edges $e_1$ and $e_2$ appear in the same cycle}$$

where each cycle cover is counted with repetitions in $\mathcal{C}_P$ and $\mathcal{C}_Q$.

\begin{claim}
	\label{lemma:cpcqbijection}
	There is a one-one correspondence between $\mathcal{C}_P$ and $\mathcal{C}_Q$.
\end{claim}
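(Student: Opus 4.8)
The plan is to exhibit an explicit (weight-preserving) bijection by a \emph{reconnection} operation localized at the four terminals $s_1,t_1,s_2,t_2$ of the two distinguished edges $e_1,e_2$, carrying the orientations of the remaining edges $e_3,\ldots,e_k$ and the whole rest of the cover along untouched. First I would describe the forward map $\Phi:\mathcal{C}_P \to \mathcal{C}_Q$. Take a cover $\kappa$ of some $G_{P_b}$ in which $e_1$ and $e_2$ lie on a common cycle $\gamma$. Since $(s_1,t_1)\in P_b$, the edge $e_1$ is traversed $s_1 \to t_1$ in $\gamma$, so deleting the two directed edges $e_1,e_2$ from $\gamma$ breaks it into two directed arcs, one leaving $t_1$ and one entering $s_1$, whose remaining endpoints are $\{s_2,t_2\}$. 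I define $\Phi(\kappa)$ by replacing $e_1,e_2$ with the cross edges $(s_1,s_2)$ and $(t_1,t_2)$ and reversing the single arc incident to $t_1$, so that the forced orientations $s_1\to s_2$ and $t_1\to t_2$ of $G_{Q_c}$ are respected; a direct check of the in/out-degree constraints at the four terminals shows that these forced directions compel exactly this one reversal.

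Next I would verify that $\Phi(\kappa)$ is a legal cover of $G_{Q_c}$ for a uniquely determined $c$. Reversing an arc flips the traversal direction of precisely those marked edges $e_i$ ($i\geq 3$) lying on it, and I read off $c_{i-2}$ from the direction in which $e_i$ is now traversed. The crucial point is that at every source vertex of $G_{Q_c}$ the cover $\Phi(\kappa)$ uses exactly the one out-edge the pattern retains (the cross edges at $s_1,t_1$, and $e_i$ at its new source), while at all other vertices $G_{Q_c}$ deletes nothing, so no edge of $\Phi(\kappa)$ has been removed. The bit $b_1$, i.e.\ the orientation of $e_2$ in $P_b$, is not recorded in $c$; instead it governs the topology of the image: $b_1=0$ yields a cover in which both cross edges lie on one cycle, while $b_1=1$ splits them onto two cycles, and these two families together exhaust the covers of the $G_{Q_c}$'s.

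To finish I would construct the inverse $\Psi$ symmetrically: given a cover $\lambda$ of some $G_{Q_c}$, delete the forced cross edges $(s_1,s_2),(t_1,t_2)$, reverse the arc incident to $t_1$, and reinsert $e_1=(s_1\to t_1)$ together with $e_2$ oriented according to whether the two cross edges were co-cyclic, thereby recovering a cover of some $G_{P_b}$ with $e_1,e_2$ on a common cycle. Checking $\Psi\circ\Phi=\mathrm{id}$ and $\Phi\circ\Psi=\mathrm{id}$ is then routine once the convention ``reverse the arc at $t_1$'' is fixed consistently on both sides. Since reversing an arc in an undirected graph preserves the product of edge weights, and the cross edges are weighted so that the product of their weights equals $w(e_1)w(e_2)$, the map $\Phi$ preserves $\tilde w$, which is exactly what the downstream use in $f_2$ requires.

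I expect the main obstacle to be the orientation bookkeeping: establishing that satisfying the forced directions $s_1\to s_2$ and $t_1\to t_2$ forces reversing exactly one arc, that this reversal flips the marked edges on it into a globally consistent pattern $Q_c$, and that the construction is genuinely involutive across the two topological cases induced by $b_1$. Weight-preservation and the verification that the $e_3,\ldots,e_k$ orientations assemble into a valid $c$ are comparatively routine, but they must be tracked carefully to guarantee that $\Phi$ maps $\mathcal{C}_P$ onto all of $\mathcal{C}_Q$ rather than onto a proper subset.
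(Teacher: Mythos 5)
Your proposal is correct and follows essentially the same route as the paper: the paper's map $\varphi$ likewise deletes $e_1,e_2$, inserts the cross edges $(s_1,s_2)$ and $(t_1,t_2)$, reverses the arc incident to $t_1$, and splits into the same two cases (your co-cyclic dichotomy is exactly the paper's type~1/type~2 partition according to the orientation of $e_2$), with the inverse defined by the symmetric reconnection. Your explicit bookkeeping of which pattern $Q_c$ the image lands in and of the weight of the cross edges is somewhat more careful than the paper's write-up, but it is the same bijection.
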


\begin{proof}
	We define the mapping $\varphi: \mathcal{C}_P \to \mathcal{C}_Q$ as follows. Given a cycle cover in $\mathcal{C}_P$, remove the edges $e_1$ and $e_2$ and add edges $(s_1, s_2)$ and $(t_1, t_2)$, refer to Figure \ref{fig:bijection} below.
	
	To show that this a well-defined mapping and indeed a bijection, we partition $\mathcal{C}_P$ into type 1 and type 2 cycle covers, depending upon the orientation of the edge $e_2$. Consider the cycle in which $e_1$ and $e_2$ appear together. Then if the edge $e_2$ is oriented as $(s_2,t_2)$ then we call it type 1 cycle cover otherwise we call it a type 2 cycle cover. 
	
	Similarly, we partition $\mathcal{C}_Q$ into type 1 and type 2 cycle covers. If the edges $\{s_1,s_2\}$ and $\{t_1,t_2\}$ appear in the same cycle then we call it a type 1 cycle cover otherwise we call it a type 2 cycle cover.
	
	Fix a type 1 cycle cover of $\mathcal{C}_P$. Then it contains a cycle of the form $$(s_1 \xrightarrow{e_1} t_1 \rightarrow P_1 \rightarrow s_2 \xrightarrow{e_2} t_2 \rightarrow P_2 \rightarrow s_1)$$ 
	Applying $\varphi$ to this cycle cover we get the cycle $$(s_1 \xrightarrow{e_1} s_2 \rightarrow P_1^{\text{reverse}} \rightarrow t_1 \xrightarrow{e_2} t_2 \rightarrow P_2^{\text{reverse}} \rightarrow s_1)$$ and the other cycles remain intact. This constitutes a type 1 cycle cover in $\mathcal{C}_Q$
	
	Similarly, consider a type 2 cycle cover of $\mathcal{C}_P$ with the cycle $$(s_1 \xrightarrow{e_1} t_1 \rightarrow P_1 \rightarrow t_2 \xrightarrow{e_2} s_2 \rightarrow P_2 \rightarrow s_1)$$
	Applying $\varphi$ to this cycle cover we get two cycles $$(s_1 \xrightarrow{e_1} s_2 \rightarrow P_2^{\text{reverse}} \rightarrow  s_1)$$ $$(t_1 \xrightarrow{e_2} t_2 \rightarrow P_1^{\text{reverse}} \rightarrow t_1)$$ and the other cycles remain intact. This constitutes a type 2 cycle cover in $\mathcal{C}_Q$.
	
	Therefore, $\varphi$ is a well-defined mapping and furthermore type $i$ cycle covers of $\mathcal{C}_P$ are mapped to type $i$ cycle covers of $\mathcal{C}_Q$, $i \in \{1,2\}$. Now consider $\psi: \mathcal{C}_Q \to \mathcal{C}_P$ defined as follows. Given a cycle cover in $\mathcal{C}_Q$, remove the edges $(s_1, s_2)$ and $(t_1, t_2)$ in the cycle and insert edges $e_1$ and $e_2$ with the orientation decided by the type. By an similar argument as above, we get that $\psi$ is well-defined and clearly $\psi$ is inverse of $\varphi$.
	
	\begin{figure}[htb]
		\begin{center}
			\begin{tikzpicture}
			\node (psi) at (0,0) {$s_1$};
			\node (pti) at (0,1) {$t_1$};
			\node (psj) at (2,0) {$s_2$};
			\node (ptj) at (2,1) {$t_2$};
			\node (qsi) at (5,0) {$s_1$};
			\node (qti) at (5,1) {$t_1$};
			\node (qsj) at (7,0) {$s_2$};
			\node (qtj) at (7,1) {$t_2$};
			
			\draw[->] (psi) -- (pti);
			\draw[->] (psj) -- (ptj);
			\draw[dashed] (pti) -- (psj);
			\draw[dashed] (psi) -- (ptj);
			\draw[<->, thick] (3,0.5) -- (4,0.5);
			\draw[->] (qsi) -- (qsj);
			\draw[->] (qti) -- (qtj);
			\draw[dashed] (qti) -- (qsj);
			\draw[dashed] (qsi) -- (qtj);
			
			\node at (1,-1) {Type 1 cycle cover in $\mathcal{C}_P$};
			\node at (6,-1) {Type 1 cycle cover in $\mathcal{C}_Q$};
			\end{tikzpicture}
		\end{center}
		\quad\quad
		\begin{center}
			\begin{tikzpicture}
			\node (psi) at (0,0) {$s_1$};
			\node (pti) at (0,1) {$t_1$};
			\node (psj) at (2,0) {$s_2$};
			\node (ptj) at (2,1) {$t_2$};
			\node (qsi) at (5,0) {$s_1$};
			\node (qti) at (5,1) {$t_1$};
			\node (qsj) at (7,0) {$s_2$};
			\node (qtj) at (7,1) {$t_2$};
			
			\draw[->] (psi) -- (pti);
			\draw[->] (ptj) -- (psj);
			\draw[dashed] (psi) -- (psj);
			\draw[dashed] (pti) -- (ptj);
			\draw[<->, thick] (3,0.5) -- (4,0.5);
			\draw[->] (qsi) -- (qsj);
			\draw[->] (qti) -- (qtj);
			\draw[dashed] (qsi) .. controls (6,0.5) .. (qsj);
			\draw[dashed] (qti) .. controls (6,0.5) .. (qtj);
			
			\node at (1,-1) {Type 2 cycle cover in $\mathcal{C}_P$};
			\node at (6,-1) {Type 2 cycle cover in $\mathcal{C}_Q$};
			\end{tikzpicture}
		\end{center}
		\captionsetup{justification=centering}
		\caption{Bijection between $\mathcal{C}_P$ and $\mathcal{C}_Q$}
		\label{fig:bijection}
	\end{figure}
\end{proof}

Now suppose $C$ is a cycle cover of $G$ such that edges $e_1$ and $e_2$ appear in different cycles. We have two cases:

\textbf{Case 1}: number of non-trivial cycles in $C$ is more than $2$. Consider any two cycles in $C$ such that $e_1$ is not incident on them. We can orient these two cycles in both clockwise and anti-clockwise direction and so we get that $C$ is a cycle cover in $G_{P_b}$ for atleast $4$ $b$-sequences. Hence, the term corresponding to $C$ cancels out in $f_2 \Modulo{4}$

\textbf{Case 2}: number of non-trivial cycles is exactly two. Then $C$ is a cycle cover in $G_{P_b}$ for exactly two $b$-sequences , that is the the cycle passing through $e_2$ has two possible orientations whereas cycle passing through $e_1$ has a fixed orientation (as orientation of $e_1$ remains fixed in all $G_{P_b}$). Hence, the term corresponding to $C$ appears with a coefficient of two in $f_2 \Modulo{4}$. Therefore, the non-zero terms in $f_2 \Modulo{4}$ correspond to only the cycle covers in which edges $e_1$ and $e_2$ appear in different cycles and number of non-trivial cycles is exactly $2$. Assuming a unique shortest 2-disjoint cycle exists, it's weight can be obtained from the smallest exponent with a non-zero coefficient in $f_2 \Modulo{4}$. Finally, to drop this assumption, we again assign random weights as done previously to ensure that the minimum weight solution is unique, with high probability. In the next section we discuss a common weighting scheme to obtain a $\parity \L / \poly$ algorithm. This shall complete the proof of Theorem \ref{th:separatingedges}.

\begin{corollary} (Theorem \ref{thm:main2} restated)
	Given a set of $k$-edges $\{e_i\}_{i \leq k}$, we can find weight of the shortest 2 cycles passing through these edges in $\parity \L / \poly$ (and $\RNC$)
\end{corollary}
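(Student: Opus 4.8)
The plan is to reduce this corollary to Theorem~\ref{th:separatingedges} by a brute-force search over which two of the $k$ marked edges are separated between the two cycles. The only gap between the two statements is that Theorem~\ref{th:separatingedges} additionally insists that $e_1$ and $e_2$ lie on different cycles, whereas here we impose no such constraint; we must therefore account for every way in which the marked edges can be distributed among the two cycles.

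First I would observe that in any valid solution to $SDCE(2,k)$ the two disjoint non-trivial cycles partition the $k$ marked edges into two non-empty groups, one incident to each cycle. Hence there is always at least one pair of indices $i<j$ such that $e_i$ and $e_j$ end up on different cycles; this is the combinatorial fact that makes the enumeration exhaustive. For each of the $\binom{k}{2} = O(1)$ unordered pairs $\{e_i,e_j\}$ I would relabel the edges so that $e_i,e_j$ play the roles of $e_1,e_2$ and invoke Theorem~\ref{th:separatingedges} to compute the weight $w_{ij}$ of the shortest $2$-disjoint cycles through all $k$ edges in which $e_i$ and $e_j$ are separated. The output is then $\min_{i<j} w_{ij}$.

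Correctness follows from a two-sided inequality. On one hand, every feasible solution to the $(i,j)$-separated problem is a feasible solution to the unconstrained problem, so each $w_{ij}$ is at least the true optimum $w^*$. On the other hand, taking an optimal unconstrained solution and any pair $e_i,e_j$ that it happens to separate shows $w_{ij} \le w^*$ for that pair. Therefore $\min_{i<j} w_{ij} = w^*$, which is exactly the quantity we seek.

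For the complexity bound: since $k$ is fixed there are only constantly many pairs, each run lies in $\parity \L / \poly$ (and $\RNC$) by Theorem~\ref{th:separatingedges}, and these runs are independent and may be carried out in parallel. Taking the minimum of a constant number of polynomially bounded integers stays within the class, and the constantly many advice strings may be concatenated into a single polynomial-length advice string; hence the whole procedure remains in $\parity \L / \poly$ (and $\RNC$). I expect the only real subtlety to be the correctness of the enumeration---namely verifying that some separated pair always exists and that the per-pair optima combine correctly under minimization---while the complexity composition is routine given the closure of $\parity \L$ and $\RNC$ under constantly many parallel calls.
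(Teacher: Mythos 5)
Your proposal is correct and matches the paper's proof: the paper likewise enumerates all pairs $e_i, e_j$, invokes Theorem~\ref{th:separatingedges} to find the shortest 2-disjoint cycles separating that pair, and takes the minimum over all pairs. You simply spell out the two-sided correctness inequality and the advice/parallelism bookkeeping that the paper leaves implicit.
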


\begin{proof}
	For each pair of edges $e_i$ and $e_j$, we can find weight of the shortest cycles separating them using the above algorithm. Hence taking the minimum over all pairs, we get our desired result.
\end{proof}

\subsection{Common Weighting Scheme}

We have only exhibited a randomized $\parity \L$ algorithm (that is $\RNC$ algorithm). To further show that a common poly weight scheme exists for all graphs of size $n$, we use the well-known result of \cite{DBLP:journals/siamcomp/ReinhardtA00} as follows: Call a weighted undirected graph $(G,w)$ ($w$ is the given weight function on edges) \textit{min-$k$-unique}, if for any $k$ marked edges on $G$, there exists unique shortest $l$ disjoint cycles passing through these $k$ edges. Our goal is to show for each $n > 0$ there exists a set of $n^2$ weight functions $w_1,\ldots,w_{n^2}$ such that given a graph $G$ on $n$ vertices, $(G,w_i)$ is \textit{min-$k$-unique} for some $i \in [1,n^2]$.

Given a graph $G$ on $n$ vertices and $k$ marked edges $e_1, \ldots, e_k$, let $\mathcal{F}(e_1,\ldots,e_k)$ be the family of all $l$ disjoint cycles passing through $e_1,\ldots,e_k$. Using isolation lemma \cite{mvv}, if $w$ is a random weight function, that is each edge is assigned a weight from $[1,4n^{2k+2}]$ independently and uniformly at random, then probability that $\mathcal{F}(e_1,\ldots,e_k)$ has a unique minimum weight element is atleast $1 - 1/4n^{2k}$. Therefore, probability that $(G,w)$ is not \textit{min-$k$-unique} for a random weight function $w$ is atmost 

\begin{align*}
\Pr[\exists e_1,\ldots,e_k: \mathcal{F}(e_1,\ldots,e_k) \text{ doesn't have a minimum weight element}] \\ 
\leq \sum_{e_1,\ldots,e_k} \frac{1}{4n^{2k}} \leq 1/4
\end{align*}

Now we claim that there exists a set of $n^2$ weight functions $W = (w_1,\ldots,w_{n^2})$ such that for any given graph $G$ on $n$ vertices, $(G,w_i)$ is \textit{min-$k$-unique} for some $1 \leq i \leq n^2$. We say \textit{$W$ is bad} if it doesn't meet this criteria and in particular \textit{$W$ is bad for $G$}, if none of $(G, w_i)$ is \textit{min-$k$-unique}. For a randomly choosen $W$, that is each $w_i$ is chosen independently and uniformly at random, then

$$\Pr[W \text{ is bad for } G] \leq \Pr[\forall i: (G,w_i) \textit{ is not min-$k$-unique}] \leq \left(\frac14\right)^{n^2}$$
$$\implies \Pr[W \text{ is bad}] \leq \Pr[\exists G: W \text{ is bad for } G] \leq 2^{n^2}\left(\frac14\right)^{n^2} < 1$$

Hence there exists some $W = (w_1,\ldots,w_{n^2})$ which satisfies the above property and so $W$ is the required poly advice. 

To complete the argument for $SDCE(1,k), SDCE(2,k) \in \parity \L / \poly$, we obtain the weight of shortest cycle(s), by replacing replace $w'$, the random weight function, with each of the weight functions $w_i$ and output the minimum amongst them.

\subsection{Constructing Cycles}

We remark that under the assumption that the shortest cycle(s) are unique, we can recover these cycles $C$ just from the knowledge of their weight $w(C)$. This follows the standard strategy of solving search via isolation as in \cite{mvv}. For each edge $e \not \in \{e_1, \ldots, e_k\}$, delete the edge $e$ and call the resulting graph $G_e$. Running our algorithm on $(G_e, \{e_1, \ldots, e_k\})$, if the shortest cycle(s) weight is more than $w(C)$, then discard $e$ otherwise add $e$ to the set $C$, which gives us the required cycle(s).

\section{Hafnians and counting perfect matchings modulo $2^k$}
\label{sec:hafnian}

Similar to permanent and determinant, another pair of well-studied algebraic analogous functions on a matrix are hafnians and pfaffians. Let $A = (a_{ij})$ be a symmetric $2n \times 2n$ matrix over integers, hafnian is defined as
\begin{align}
\hf(A) &= \frac{1}{2^nn!} \sum_{\sigma \in S_{2n}} \prod_{j=1}^n a_{\sigma(2j-1),\sigma(2j)}
\end{align}

Note that the diagonal entries of $A$ don't contribute in the calculation of hafnians and hence we can assume them to be $0$. Let $B = (b_{ij})$ be a skew-symmetric $2n \times 2n$ matrix, pfaffian is defined as
\begin{align}
\pf(B) &= \frac{1}{2^nn!} \sum_{\sigma \in S_{2n}} \sgn(\sigma) \prod_{j=1}^n b_{\sigma(2j-1),\sigma(2j)}
\end{align}

But notice that $\hf(A) \equiv \pf(A) \Modulo{2}$. \cite{DBLP:journals/dam/MahajanSV04} have shown that $\pf(A)$ can be computed in $\NC$ and hence as an immediate consequence we get that $\hf(A) \pmod{2}$ can be computed in $\NC$. We can reduce the computation of hafnian to several hafnians of smaller submatrices using the following lemma. Denote by $A[i,j]$ the matrix obtained from $A$ by deleting rows $i$ and $j$, columns $i$ and $j$.

\begin{lemma} (\cite{DBLP:journals/algorithmica/HiraiN18} Lemma 2.2)
	$$\hf(A) = \sum_{j: j\neq i} a_{ij} \hf(A[i,j])$$
	$$\hf(A) = a_{ij} \hf(A[i,j]) + \sum_{pq: p,q \not \in \{i,j\}, p \neq q} (a_{ip}a_{jq}+a_{iq}a_{jp}) \hf(A[i,j,p,q])$$
\end{lemma}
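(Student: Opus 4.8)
The plan is to work with the combinatorial interpretation of the hafnian. Since $A$ is symmetric with zero diagonal, I would first establish that
$$\hf(A) = \sum_{M} \prod_{\{a,b\} \in M} a_{ab},$$
where $M$ ranges over all perfect matchings of the complete graph on the vertex set $[2n]$. This follows directly from the definition: each perfect matching corresponds to exactly $2^n n!$ permutations $\sigma$ (the factor $2^n$ from swapping the two endpoints within each of the $n$ pairs, the factor $n!$ from permuting the order of the pairs), all contributing the same product $\prod_j a_{\sigma(2j-1),\sigma(2j)}$ by symmetry of $A$, so the normalizing factor $\tfrac{1}{2^n n!}$ exactly collapses the permutation sum to a sum over matchings.

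First I would prove the first identity by conditioning on the partner of the fixed vertex $i$. In any perfect matching $M$ of $[2n]$, the vertex $i$ is matched to a unique vertex $j \neq i$, and deleting the edge $\{i,j\}$ leaves a perfect matching of $[2n] \setminus \{i,j\}$. Grouping the matchings by this partner $j$, and noting that the matched edge contributes the factor $a_{ij}$ while the remainder is precisely a perfect matching weighted by the entries of the submatrix $A[i,j]$, gives
$$\hf(A) = \sum_{j \neq i} a_{ij}\, \hf(A[i,j]).$$

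For the second identity I would apply the first identity twice. Expanding along $i$ and peeling off the $j' = j$ term gives $\hf(A) = a_{ij}\hf(A[i,j]) + \sum_{p \neq i,j} a_{ip}\hf(A[i,p])$, where $j$ still indexes a surviving row and column of each $A[i,p]$ since $p \neq j$. Expanding each $\hf(A[i,p])$ along this vertex $j$ via the first identity then yields the double sum $\sum_{p \notin \{i,j\}} \sum_{q \notin \{i,j,p\}} a_{ip} a_{jq}\, \hf(A[i,j,p,q])$ over ordered pairs $(p,q)$. The only real care-point — and the step I expect to need the most attention — is merging the ordered pairs $(p,q)$ and $(q,p)$: using $\hf(A[i,j,p,q]) = \hf(A[i,j,q,p])$ together with the symmetry $a_{ab} = a_{ba}$, the two ordered contributions combine into the single unordered term $(a_{ip}a_{jq} + a_{iq}a_{jp})\,\hf(A[i,j,p,q])$, which produces exactly the stated sum over unordered pairs $\{p,q\}$.

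Alternatively, both identities admit a fully direct combinatorial proof by conditioning on the partner(s) of the fixed vertices in the perfect-matching sum: for the second, one partitions all matchings according to whether $i$ and $j$ are matched to each other (contributing $a_{ij}\hf(A[i,j])$) or matched to distinct vertices $p,q \notin \{i,j\}$. I would present whichever route is cleaner, but in either case the substantive work lies entirely in the bookkeeping of the $2^n n!$ overcounting factor and the ordered-versus-unordered grouping, rather than in any genuine mathematical difficulty.
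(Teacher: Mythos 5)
Your proof is correct, but note that the paper does not actually prove this lemma: it is invoked by citation (Lemma 2.2 of Hirai--Namba), so there is no in-paper argument to compare against, and your proposal supplies a self-contained proof where the paper has none. Your route is the standard one: rewrite $\hf(A)$ as a sum over perfect matchings of $[2n]$ (the prefactor $\frac{1}{2^n n!}$ exactly cancels the $2^n n!$ permutations $\sigma$ that encode a given matching, all yielding the same product by symmetry of $A$), then obtain the first identity by conditioning on the partner of the fixed vertex $i$, and the second either by applying the first identity twice or by partitioning matchings according to whether $i$ and $j$ are matched to each other or to distinct vertices $p,q \notin \{i,j\}$. The one genuinely delicate point — that the displayed sum $\sum_{pq}$ must be read as ranging over \emph{unordered} pairs $\{p,q\}$, since the coefficient $(a_{ip}a_{jq}+a_{iq}a_{jp})$ already accounts for both ways of attaching $p,q$ to $i,j$ — is exactly the point you flag and resolve correctly, using $a_{ab}=a_{ba}$ and $\hf(A[i,j,p,q])=\hf(A[i,j,q,p])$ to merge the ordered contributions. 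Both of your suggested routes are valid; the direct partition argument is marginally cleaner since it avoids the intermediate ordered double sum altogether.
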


Assume $\pf(A) \equiv 0 \pmod{2}$, then $\det(A) \equiv 0 \pmod{2}$ and we can find a vector $v \in \mathbb{Z}_2^{2n}$ such that $Av = A^Tv = 0 \pmod{2}$. Assume without loss of generality $v_1 = 1$.

Let $r_i, c_i$ denote the $i^{th}$ row and $i^{th}$ column of $A$ respectively. 

\begin{itemize}
	\item Construct $A'$  by replacing first row with $\sum v_i r_i$ and then replacing first column with $\sum v_i c_i$ 
	\item Construct $A_i$ by replacing first row with $r_i$ and first column with $c_i$.
\end{itemize}

Then we check that
\begin{align*}
\hf(A') &= \sum_{j > 1} \left(\sum_{i \geq 1} v_i a_{ij} \right) \hf(A[1,j]) \\
&= \sum_{i \geq 1} v_i \left( \sum_{j > 1} a_{ij} \hf(A[1,j]) \right) \\
&= \sum_{j > 1} a_{1j} \hf(A[1,j]) + \sum_{i > 1} v_i \left( \sum_{j > 1} a_{ij} \hf(A[1,j]) \right) \\
&= \hf(A) + \sum_{i > 1} v_i \hf(A_i)
\end{align*}

\textbf{Computing $\hf(A')$}: since $A^Tv = 0 \Modulo{2} \implies \sum_{i \geq 1} v_i a_{ij} = 2b_j \Modulo{2^k}$ for some $c_j \in \mathbb{Z}$ and hence
\begin{align*}
\hf(A') &= \sum_{j > 1} \left(\sum_{i \geq 1} v_i a_{ij} \right) \hf(A[1,j]) \\
&= \sum_{j > 1} 2b_j \hf(A[1,j]) \\
\implies \hf(A') \Modulo{2^k} &=  2\left(\sum_{j > 1} b_j \hf(A[1,j]) \Modulo{2^{k-1}}\right)
\end{align*}

\textbf{Computing $\hf(A_i)$}:
\begin{align*}
\hf(A_i) &= a_{ii} \hf(A[1,i]) + \sum_{pq: p,q \not \in \{1,i\}, p \neq q} 2a_{ip}a_{iq} \hf(A[1,i,p,q]) \\
\implies \hf(A_i) \Modulo{2^k} &= 2\left(\sum_{pq: p,q \not \in \{1,i\}, p \neq q} a_{ip}a_{iq} \hf(A[1,i,p,q]) \Modulo{2^{k-1}} \right)
\end{align*}

Thus, we can compute $\hf(A) \Modulo{2^k}$ provided $\pf(A) \equiv 0 \Modulo{2}$. 

Now if $\pf(A) \not \equiv 0 \Modulo{2}$, then we can find $(i,j), i \neq j$ such that $\hf(A[i,j]) \not \equiv 0 \Modulo{2}$. Consider the matrix $C$ where all entries are same as in $A$ except $a_{ij}$ is replaced with $a_{ij}+1$, then we get $\hf(C) = \hf(A) + \hf(A[i,j])$. Since $\hf(C) \equiv 0 \Modulo{2}$, we can compute $\hf(C) \Modulo{2^k}$ as described above and since $\hf(A[i,j])$ is a $(n-2) \times (n-2)$ matrix, we compute it's hafnian recursively modulo $2^k$. Therefore, we can compute $\hf(A) = \hf(C) - \hf(A[i,j]) \Modulo{2^k}$.

This gives us a $\P$ algorithm for computing hafnians modulo $2^k$.

\paragraph*{Counting perfect matchings modulo $2^k$}

Let $G$ be an undirected graph and $A_G$ denote the adjacency matrix of the graph $G$. If $G$ has odd number of vertices, then clearly there aren't any perfect matchings. Otherwise it is straight-forward to see that number of perfect matchings in $G$ is same as the value $\hf(A_G)$. Hence the result follows.

	\section{Conclusion}
	\label{sec:conclusion}
	
	We started by recognizing the appropriate algebraic structure $\mathfrak{R}$ over which we can present a parallel algorithm to compute permanent modulo $2^k$. Then we saw two techniques to get permanent over $\mathbb{Z}[x] \Modulo{2^k}$ from $\mathfrak{R} \Modulo{2^k}$. First method was to choose a large enough irreducible polynomial for our ring $\mathfrak{R}$. Another method was to interpolate over the ring $\mathfrak{R}$, which was an extension of the commonly known interpolation over finite fields.
	
	Then we considered applications for parallel polynomial permanent. This includes a direct parallelization of the shortest 2-disjoint paths problem as given by \cite{DBLP:journals/siamcomp/BjorklundH19}. Another direct application, although which required some modification, was finding shortest cycle passing through given vertices \cite{8efe47374b8c4cadb1165092ce46518d}. We further presented a common framework to view the above mentioned problems. This also aided us in further generalizing and obtaining a parallel algorithm to find shortest 2-disjoint cycles in any weighted undirected graph.
	
	The more general question of computing permanent over arbitrary commutative rings of characteristic $2^k$ for $k \geq 2$ still remains open. On the other hand, using the framework we presented, we ask if it is possible to obtain shortest $k$ disjoint cycles for $k \geq 3$?
	
	\bibliography{permanent}

\begin{thebibliography}{10}

\bibitem{DBLP:journals/corr/abs-1806-07586}
Andreas Bj{\"{o}}rklund and Thore Husfeldt.
\newblock Counting shortest two disjoint paths in cubic planar graphs with an
  {NC} algorithm.
\newblock {\em CoRR}, abs/1806.07586, 2018.
\newblock URL: \url{http://arxiv.org/abs/1806.07586}, \href
  {http://arxiv.org/abs/1806.07586} {\path{arXiv:1806.07586}}.

\bibitem{DBLP:journals/siamcomp/BjorklundH19}
Andreas Bj{\"{o}}rklund and Thore Husfeldt.
\newblock Shortest two disjoint paths in polynomial time.
\newblock {\em {SIAM} J. Comput.}, 48(6):1698--1710, 2019.
\newblock \href {https://doi.org/10.1137/18M1223034}
  {\path{doi:10.1137/18M1223034}}.

\bibitem{DBLP:journals/cc/BravermanKR09}
Mark Braverman, Raghav Kulkarni, and Sambuddha Roy.
\newblock Space-efficient counting in graphs on surfaces.
\newblock {\em Comput. Complex.}, 18(4):601--649, 2009.
\newblock \href {https://doi.org/10.1007/s00037-009-0266-4}
  {\path{doi:10.1007/s00037-009-0266-4}}.

\bibitem{DBLP:conf/focs/CyganMPP13}
Marek Cygan, D{\'{a}}niel Marx, Marcin Pilipczuk, and Michal Pilipczuk.
\newblock The planar directed k-vertex-disjoint paths problem is
  fixed-parameter tractable.
\newblock In {\em 54th Annual {IEEE} Symposium on Foundations of Computer
  Science, {FOCS} 2013, 26-29 October, 2013, Berkeley, CA, {USA}}, pages
  197--206. {IEEE} Computer Society, 2013.
\newblock \href {https://doi.org/10.1109/FOCS.2013.29}
  {\path{doi:10.1109/FOCS.2013.29}}.

\bibitem{DAMM}
Carsten Damm.
\newblock Problems complete for parityl.
\newblock {\em Information Processing Letters}, 36(5):247--250, 1990.
\newblock URL:
  \url{https://www.sciencedirect.com/science/article/pii/002001909090150V},
  \href {https://doi.org/https://doi.org/10.1016/0020-0190(90)90150-V}
  {\path{doi:https://doi.org/10.1016/0020-0190(90)90150-V}}.

\bibitem{DBLP:conf/fsttcs/DattaIK018}
Samir Datta, Siddharth Iyer, Raghav Kulkarni, and Anish Mukherjee.
\newblock Shortest k-disjoint paths via determinants.
\newblock In Sumit Ganguly and Paritosh~K. Pandya, editors, {\em 38th {IARCS}
  Annual Conference on Foundations of Software Technology and Theoretical
  Computer Science, {FSTTCS} 2018, December 11-13, 2018, Ahmedabad, India},
  volume 122 of {\em LIPIcs}, pages 19:1--19:21. Schloss Dagstuhl -
  Leibniz-Zentrum f{\"{u}}r Informatik, 2018.
\newblock \href {https://doi.org/10.4230/LIPIcs.FSTTCS.2018.19}
  {\path{doi:10.4230/LIPIcs.FSTTCS.2018.19}}.

\bibitem{DBLP:journals/talg/VerdiereS11}
{\'{E}}ric~Colin de~Verdi{\`{e}}re and Alexander Schrijver.
\newblock Shortest vertex-disjoint two-face paths in planar graphs.
\newblock {\em {ACM} Trans. Algorithms}, 7(2):19:1--19:12, 2011.
\newblock \href {https://doi.org/10.1145/1921659.1921665}
  {\path{doi:10.1145/1921659.1921665}}.

\bibitem{715897}
W.~{Eberly}.
\newblock Very fast parallel matrix and polynomial arithmetic.
\newblock In {\em 25th Annual Symposium onFoundations of Computer Science,
  1984.}, pages 21--30, 1984.
\newblock \href {https://doi.org/10.1109/SFCS.1984.715897}
  {\path{doi:10.1109/SFCS.1984.715897}}.

\bibitem{218278}
W.~{Eberly}.
\newblock Efficient parallel independent subsets and matrix factorizations.
\newblock In {\em Proceedings of the Third IEEE Symposium on Parallel and
  Distributed Processing}, pages 204--211, 1991.
\newblock \href {https://doi.org/10.1109/SPDP.1991.218278}
  {\path{doi:10.1109/SPDP.1991.218278}}.

\bibitem{10.1145/44483.44496}
Faith~E. Fich and Martin Tompa.
\newblock The parallel complexity of exponentiating polynomials over finite
  fields.
\newblock {\em J. ACM}, 35(3):651–667, June 1988.
\newblock \href {https://doi.org/10.1145/44483.44496}
  {\path{doi:10.1145/44483.44496}}.

\bibitem{DBLP:journals/tcs/FortuneHW80}
Steven Fortune, John~E. Hopcroft, and James Wyllie.
\newblock The directed subgraph homeomorphism problem.
\newblock {\em Theor. Comput. Sci.}, 10:111--121, 1980.
\newblock \href {https://doi.org/10.1016/0304-3975(80)90009-2}
  {\path{doi:10.1016/0304-3975(80)90009-2}}.

\bibitem{HAB}
William Hesse, Eric Allender, and David A.~Mix Barrington.
\newblock Uniform constant-depth threshold circuits for division and iterated
  multiplication.
\newblock {\em J. Comput. Syst. Sci.}, 65(4):695--716, 2002.
\newblock \href {https://doi.org/10.1016/S0022-0000(02)00025-9}
  {\path{doi:10.1016/S0022-0000(02)00025-9}}.

\bibitem{DBLP:journals/algorithmica/HiraiN18}
Hiroshi Hirai and Hiroyuki Namba.
\newblock Shortest (a+b)-path packing via hafnian.
\newblock {\em Algorithmica}, 80(8):2478--2491, 2018.
\newblock \href {https://doi.org/10.1007/s00453-017-0334-0}
  {\path{doi:10.1007/s00453-017-0334-0}}.

\bibitem{DBLP:journals/corr/abs-2011-03819}
Ce~Jin, Nikhil Vyas, and Ryan Williams.
\newblock Fast low-space algorithms for subset sum.
\newblock {\em CoRR}, abs/2011.03819, 2020.
\newblock URL: \url{https://arxiv.org/abs/2011.03819}, \href
  {http://arxiv.org/abs/2011.03819} {\path{arXiv:2011.03819}}.

\bibitem{10.1145/1061425.1061430}
James~F. Lynch.
\newblock The equivalence of theorem proving and the interconnection problem.
\newblock {\em SIGDA Newsl.}, 5(3):31–36, September 1975.
\newblock \href {https://doi.org/10.1145/1061425.1061430}
  {\path{doi:10.1145/1061425.1061430}}.

\bibitem{DBLP:journals/dam/MahajanSV04}
Meena Mahajan, P.~R. Subramanya, and V.~Vinay.
\newblock The combinatorial approach yields an \({}_{\mbox{nc}}\) algorithm for
  computing pfaffians.
\newblock {\em Discret. Appl. Math.}, 143(1-3):1--16, 2004.
\newblock \href {https://doi.org/10.1016/j.dam.2003.12.001}
  {\path{doi:10.1016/j.dam.2003.12.001}}.

\bibitem{DBLP:journals/cjtcs/MahajanV97}
Meena Mahajan and V.~Vinay.
\newblock Determinant: Combinatorics, algorithms, and complexity.
\newblock {\em Chic. J. Theor. Comput. Sci.}, 1997, 1997.
\newblock URL:
  \url{http://cjtcs.cs.uchicago.edu/articles/1997/5/contents.html}.

\bibitem{DBLP:journals/combinatorica/Mulmuley87}
Ketan Mulmuley.
\newblock A fast parallel algorithm to compute the rank of a matrix over an
  arbitrary field.
\newblock {\em Comb.}, 7(1):101--104, 1987.
\newblock \href {https://doi.org/10.1007/BF02579205}
  {\path{doi:10.1007/BF02579205}}.

\bibitem{mvv}
Ketan Mulmuley, Umesh~V. Vazirani, and Vijay~V. Vazirani.
\newblock Matching is as easy as matrix inversion.
\newblock In {\em Proceedings of the Nineteenth Annual ACM Symposium on Theory
  of Computing}, STOC '87, page 345–354, New York, NY, USA, 1987. Association
  for Computing Machinery.
\newblock \href {https://doi.org/10.1145/28395.383347}
  {\path{doi:10.1145/28395.383347}}.

\bibitem{2005math......6382O}
Pavel {Okunev} and Charles~R. {Johnson}.
\newblock {Necessary And Sufficient Conditions For Existence of the LU
  Factorization of an Arbitrary Matrix}.
\newblock {\em arXiv Mathematics e-prints}, page math/0506382, June 2005.
\newblock \href {http://arxiv.org/abs/math/0506382}
  {\path{arXiv:math/0506382}}.

\bibitem{DBLP:journals/siamcomp/ReinhardtA00}
Klaus Reinhardt and Eric Allender.
\newblock Making nondeterminism unambiguous.
\newblock {\em {SIAM} J. Comput.}, 29(4):1118--1131, 2000.
\newblock \href {https://doi.org/10.1137/S0097539798339041}
  {\path{doi:10.1137/S0097539798339041}}.

\bibitem{DBLP:journals/ijfcs/Ripphausen-LipaWW96}
Heike Ripphausen{-}Lipa, Dorothea Wagner, and Karsten Weihe.
\newblock Linear-time algorithms for disjoint two-face paths problems in planar
  graphs.
\newblock {\em Int. J. Found. Comput. Sci.}, 7(2):95--110, 1996.
\newblock \href {https://doi.org/10.1142/S0129054196000087}
  {\path{doi:10.1142/S0129054196000087}}.

\bibitem{ROBERTSON199565}
N.~Robertson and P.D. Seymour.
\newblock Graph minors .xiii. the disjoint paths problem.
\newblock {\em Journal of Combinatorial Theory, Series B}, 63(1):65--110, 1995.
\newblock URL:
  \url{https://www.sciencedirect.com/science/article/pii/S0095895685710064},
  \href {https://doi.org/https://doi.org/10.1006/jctb.1995.1006}
  {\path{doi:https://doi.org/10.1006/jctb.1995.1006}}.

\bibitem{DBLP:journals/siamcomp/Schrijver94}
Alexander Schrijver.
\newblock Finding k disjoint paths in a directed planar graph.
\newblock {\em {SIAM} J. Comput.}, 23(4):780--788, 1994.
\newblock \href {https://doi.org/10.1137/S0097539792224061}
  {\path{doi:10.1137/S0097539792224061}}.

\bibitem{DBLP:conf/soda/SuzukiAN90}
Hitoshi Suzuki, Takehiro Akama, and Takao Nishizeki.
\newblock Finding steiner forests in planar graphs.
\newblock In David~S. Johnson, editor, {\em Proceedings of the First Annual
  {ACM-SIAM} Symposium on Discrete Algorithms, 22-24 January 1990, San
  Francisco, California, {USA}}, pages 444--453. {SIAM}, 1990.
\newblock URL: \url{http://dl.acm.org/citation.cfm?id=320176.320230}.

\bibitem{DBLP:journals/tcs/Valiant79}
Leslie~G. Valiant.
\newblock The complexity of computing the permanent.
\newblock {\em Theor. Comput. Sci.}, 8:189--201, 1979.
\newblock \href {https://doi.org/10.1016/0304-3975(79)90044-6}
  {\path{doi:10.1016/0304-3975(79)90044-6}}.

\bibitem{van2013introduction}
J.H. van Lint.
\newblock {\em Introduction to Coding Theory}.
\newblock Graduate Texts in Mathematics. Springer Berlin Heidelberg, 2013.
\newblock URL: \url{https://books.google.co.in/books?id=6dbqCAAAQBAJ}.

\bibitem{8efe47374b8c4cadb1165092ce46518d}
Magnus Wahlstr{\"o}m.
\newblock Abusing the tutte matrix: An algebraic instance compression for the
  k-set-cycle problem.
\newblock In {\em STACS}, pages 341--352, 2013.

\bibitem{doi:10.1142/S0129054191000066}
VIKTÓRIA ZANKÓ.
\newblock \#p-completeness via many-one reductions.
\newblock {\em International Journal of Foundations of Computer Science},
  02(01):77--82, 1991.
\newblock \href
  {http://arxiv.org/abs/https://doi.org/10.1142/S0129054191000066}
  {\path{arXiv:https://doi.org/10.1142/S0129054191000066}}, \href
  {https://doi.org/10.1142/S0129054191000066}
  {\path{doi:10.1142/S0129054191000066}}.

\end{thebibliography}
\end{document}